\documentclass[runningheads]{llncs}

\usepackage{amssymb}
\usepackage{amsmath}
\usepackage{mathtools}
\usepackage{relsize}
\usepackage{comment}
\usepackage{color}
\usepackage{cite}
\usepackage{comment}
\usepackage{tcolorbox}
\usepackage{xspace}

\usepackage{dsfont}
\usepackage{graphicx}
\usepackage{nicefrac}
\usepackage{tikz}
\usetikzlibrary{calc,math,patterns,shapes,backgrounds}
\usepackage{todonotes} 
\setuptodonotes{inline}
\usepackage{tabularx}

\newtheorem{Reduction}{Reduction Rule}

\usepackage{framed}
\usepackage{url}
\usepackage{cite}

\usepackage{xcolor}
\usepackage{comment}


\newcommand{\odc}{\textsc{Odd Coloring}}  

\newcommand{\oc}{\textsc {Odd Coloring}}

\usepackage{enumitem}
\usepackage{xcolor}
\usepackage{comment}
\usepackage{tikz}
\usepackage{epsfig}
\usepackage{wrapfig}
\usetikzlibrary{shapes,backgrounds, patterns}

\usepackage{xcolor}
\usepackage{comment}
\usepackage{tcolorbox}
\tcbuselibrary{breakable}

\colorlet{bscolor}{blue}
\colorlet{vrcolor}{red}

\newcommand{\Omit}[1]{}

\usepackage{hyperref,lipsum}

\hypersetup{
  colorlinks   = true, 
  urlcolor     = black, 
  linkcolor    = blue, 
  citecolor   = red 
}
\usepackage{array}

\usepackage{xcolor}
\usepackage{comment}
\usepackage{tcolorbox}
\tcbuselibrary{breakable}
\newtcolorbox{mybox}[2][]{colbacktitle=white,colback=white,coltitle=black,title={#2},fonttitle=\bfseries,#1, left = 1mm, right = 2mm, breakable}
\usepackage{todonotes}
\usepackage[linesnumbered,ruled,vlined]{algorithm2e}
\usepackage{algpseudocode}


\newtheorem{reduction rule}{Reduction Rule}

\newtheorem{observation}{Observation}

\newcommand{\NP}{\ensuremath{\sf{NP}}\xspace}

\newcommand{\OO}{{\mathcal O}}

\newcommand{\fpt}{\ensuremath{\sf{FPT}}\xspace}

\newtheorem{claim claim}{Claim}

\bibliographystyle{plain}

\title{On the Parameterized Complexity of Odd Coloring}
\titlerunning{}



\author{Sriram Bhyravarapu\inst{1}, Swati Kumari\inst{2} \and
I. Vinod Reddy\inst{2} 
}
\authorrunning{S. Bhyravarapu and I. Vinod Reddy}
%
\institute{The Institute of Mathematical Sciences, HBNI, Chennai, India \and
Department of Computer Science and Engineering,  IIT Bhilai, India
\\
\email{sriramb@imsc.res.in, swatik@iitbhilai.ac.in, vinod@iitbhilai.ac.in}}

\begin{document}

\maketitle

\begin{abstract}

A proper vertex coloring of a connected graph $G$ is called an odd coloring if, for every  vertex $v$ in $G$, there exists a color that appears odd number of times in the open neighborhood of $v$. The minimum number of colors
required to obtain an odd coloring of $G$ is called the \emph{odd chromatic
number} of $G$, denoted by $\chi_{o}(G)$.
Determining  $\chi_o(G)$ known to be ${\sf NP}$-hard. Given a graph $G$ and an integer $k$, the \odc{} problem is to decide whether $\chi_o(G)$ is at most $k$. In this paper, we study the parameterized complexity of the problem, particularly with respect to structural graph parameters. We obtain the following results:
\begin{itemize}
    \item We prove that the problem admits a polynomial kernel when parameterized by the  distance to clique.
    \item We show that the problem cannot have a polynomial kernel when parameterized by the vertex cover number  unless ${\sf NP} \subseteq {\sf Co {\text -} NP/poly}$.
    \item  We show that the problem is fixed-parameter tractable when parameterized by distance to cluster, distance to co-cluster, or neighborhood diversity.
    \item We show that the problem is ${\sf W[1]}$-hard parameterized by clique-width.
\end{itemize}
   Finally, we study the complexity of the problem on restricted graph classes. We show that it can be solved in polynomial time on cographs and split graphs but remains NP-complete  on certain subclasses of bipartite graphs.

\end{abstract}
\section{Introduction}
All graphs considered in this paper are finite, connected, undirected, and simple. For a graph $G=(V,E)$, the vertex set and edge set of $G$ are denoted by $V(G)$ and $E(G)$  respectively. 
For a positive integer $k$, a $k$-coloring of a graph $G$ is a mapping $f: V(G) \rightarrow \{1,2,\ldots, k\}$.
A $k$-coloring of a graph $G$ is called a \emph{proper $k$-coloring}, if $f(u) \neq f(v)$ for every edge $uv$ of $G$. The chromatic number $\chi(G)$ of a graph $G$ is the minimum $k$ for which there exists a proper $k$-coloring of $G$.
A hypergraph $\mathcal{H}=(\mathcal{V},\mathcal{E})$ is a generalization of a graph, where hyper-edges are subsets of $\mathcal{V}$ of arbitrary positive size. 
Several notions of vertex coloring have been studied on hypergraphs~\cite{pach2009conflict,nagy2008online}. One such coloring is conflict-free (CF) coloring of hypergraphs, introduced by Even et al.~\cite{even2003conflict} in geometric setting, motivated by its applications in frequency assignment problems on cellular networks.
A coloring of a hypergraph $\mathcal{H}$ is called \emph{conflict-free} if, for every hyper-edge $e$, there is a color that occurs exactly once on the vertices of $e$. 
The minimum number of colors needed to
conﬂict-free color the vertices of a hypergraph $\mathcal{H}$ is called the conﬂict-free chromatic number of $\mathcal{H}$. Conﬂict-free coloring of  hypergraphs is well studied on rectangles~\cite{ajwani2007conflict}, intervals~\cite{bar2008deterministic},  unit disks~\cite{lev2009conflict} etc. 

Cheilaris~\cite{Cheilaris} studied the CF coloring of graphs with respect to neighborhoods. A coloring $f$ (not necessarily proper)  of a graph $G$ is called \emph{conflict-free} (CF) if for every vertex $v$ there exists a color that appears exactly once in the open neighborhood of $v$. For more details on Conflict-free coloring of graphs, see, e.g.,~\cite{abel2018conflict,gargano2015complexity}.

Another variant of conflict-free coloring called \emph{proper conflict-free}  (PCF) coloring was introduced by Fabrici et al.~\cite{fabrici2023proper}. A proper $k$-coloring of a graph $G$ is called a proper conflict-free $k$-coloring if each vertex has a color appearing exactly once in its open neighborhood. 
It was shown~\cite{fabrici2023proper} that 
a planar graph has a proper conflict-free $8$-coloring and also constructed a planar graph with no proper conflict-free $5$-coloring. This problem has been studied on several graph classes, such as minor-closed families,
graph with bounded layered treewidth~\cite{liu2022proper}, and graphs with bounded expansion~\cite{hickingbotham2022odd}.



The concept of odd coloring was introduced by Petru{\v{s}}evski and {\v{S}}krekovski \cite{petruvsevski2022colorings}, which is a generalization of proper conflict-free coloring.  
A proper $k$-coloring of a graph $G$ is called an \emph{odd $k$-coloring} if, every  vertex has some color that appears an odd number of times in its open neighborhood.
The \emph{odd chromatic number} of a graph $G$, denoted $\chi_o(G)$, is the smallest integer $k$ such that $G$ has an odd $k$-coloring. Given a graph $G$ and an integer $k$, the \odc{} problem is to decide whether $\chi_o(G)$ is at most $k$. 



The \odc{} problem has been extensively studied on planar graphs~\cite{caro2022remarks,petr2023odd,petruvsevski2022colorings,cranston2023note,cho2023odd}. 
Petru{\v{s}}evski and {\v{S}}krekovski \cite{petruvsevski2022colorings} showed that $\chi_o(G) \leq 9$ for every connected planar graph $G$ and  conjectured that for all planar graphs $G$, $\chi_o(G) \leq 5$. Towards resolving this conjecture, Petr and Portier~\cite{petr2023odd} showed that $\chi_o(G) \leq 8$ for every planar graph $G$.
Cranston et al.~\cite{cranston2023note} proved that every $1$-planar graph $G$\footnote{it can drawn in the plane so that each edge is crossed by at most one other edge}, $\chi_o(G) \leq 23$. 
Caro et al. \cite{caro2022remarks} showed that for every outerplanar graph $G$, $\chi_o(G) \leq 5$. They also showed that determining $\chi_o(G)$ of a graph $G$ is $\NP$-hard. Furthermore, Ahn et al.~\cite{ahn2022proper} proved that it is $\NP$-complete to decide whether $\chi_o(G) \leq k$ for $k \geq 3$, even when $G$ is bipartite. 

	\vspace{-0.1cm} 
\begin{framed}
	\vspace{-0.2cm}
	\noindent \textsc{\odc{}} \\
	\textbf{Input:} A graph $G=(V,E)$ and integer $k$. \\
	\textbf{Question:}  Does there exists an odd coloring $f: V(G) \rightarrow \{1,\ldots, k\}$ of $G$? 
	\vspace{-0.2cm}
\end{framed}
	\vspace{-0.3cm}

In this paper, we study the parameterized complexity of \odc{}. In parameterized complexity, the running time of an algorithm is measured as a function of input size and a secondary measure called a parameter. A parameterized problem is called  fixed-parameter tractable ($\fpt$) with respect to a parameter $d$, if the problem can be solved in $f(d) n^{O(1)}$ time, where $f$ is a computable function independent of the input size $n$ and $d$ is a parameter associated with the input.  
For more details on parameterized complexity, we refer the reader to the texts~\cite{cygan2015parameterized,downey1999parameterized}. The most natural parameter for \odc{} is $k$, the number of colors.   Since the problem is $\NP$-hard even when $k=2$, it becomes Para-$\NP$-hard when parameterized by number of colors $k$. Therefore, we study the problem with respect to structural graph parameters.

The graph parameter tree-width is a well-known structural graph parameter. 
\odc{} 
is 
 expressible in counting monadic second-order logic (MSO). 
Therefore, the problem is $\fpt$ from the Courcelle~\cite{courcelle1992monadic} meta theorem.
Next, we consider the parameter clique-width~\cite{courcellecw}, which is a generalization of tree-width. We show that the problem is {\sf W[1]}-hard parameterized by the clique-width.

Next, we consider the parameters distance to cluster, distance to co-cluster and neighborhood diversity. These are intermediate parameters between vertex cover and clique-width.
We show that the problem is $\fpt$ when parameterized by (a) distance to cluster, (b) distance to co-cluster, or (c) neighborhood diversity. 
We also study the kernelization complexity of the problem. We show that \odc{} admits a polynomial kernel when parameterized by distance to clique. However, it does not admit a polynomial kernel when parameterized by the vertex cover number  unless ${\sf NP} \subseteq {\sf Co {\text -} NP/poly}$. 

Finally, we consider the parameters distance to split graphs and distance to cographs, both of which generalize  vertex cover and distance to clique. However, the complexity of \odc{} is not known on split graphs and cographs. 
Hence, we study the problem on  cographs and split graphs. We show that the problem can be solved in polynomial time on  cographs and split graphs.  Since \odc{} is NP-complete on bipartite graphs~\cite{ahn2022proper}, we study its complexity on subclasses of bipartite graphs. We show that the problem remains hard even for these subclasses. 


\medskip
\noindent 
\textbf{Organization of the paper.} 
In Section~\ref{gc-dc}, we present results on the parameterized complexity of \odc{}. In Section~\ref{sec:restricted}, we present the results of \odc{} on restricted graph classes. 

\section{Preliminaries}
For $k \in \mathbb{N}$, we use $[k]$ to denote the set $\{1,2,\ldots,k\}$.  
We use $n$ to denote the number of vertices and $m$ to denote the number of edges of the graph.  
For simplicity, the edge between two vertices $x$ and $y$ is denoted by $xy$.  
For a subset $X \subseteq V(G)$, the graph $G[X]$ denotes the subgraph of $G$ induced by the vertices in $X$.  Similarly, $G - X$ represents the graph obtained by removing all vertices in $X$ and their incident edges from $G$. For a  subset $X \subseteq V(G)$,  $E[X]$ denotes the set of edges having both end vertices in the set $X$ in $G$. For subsets $X, Y \subseteq V(G)$, $E[X,Y]$ denotes the set of edges connecting $X$ and $Y$ in $G$. 
For a function $f:X\rightarrow Y$, we denote $f(X')=\bigcup_{x\in X'}f(x)$ where $X'\subseteq X$.

Let $f:V(G) \rightarrow [k]$ be a coloring of a graph $G$. We say that, a vertex $v\in V(G)$ has a color $c\in [k]$ as its \emph{odd color} with respect to $f$, if the number of neighbors of $v$  that are assigned the color $c$ is odd. Formally, $c$ is an odd color of $v$ if $|f^{-1}(c) \cap N(v)|$ is odd.  


\emph{Parameterized Complexity and Kernelization:}
A parameterized problem $L$ is a set of instances $(x,d) \in \Sigma^* \times \mathbb{N}$,  where 
 $\Sigma$ is a finite alphabet and 
 $ d \in \mathbb{N}$ is a parameter. 
We say that 
a parameterized problem $L$ is {\it fixed-parameter tractable} ($\fpt$) if there exists an algorithm  and a computable function $f : \mathbb{N} \rightarrow \mathbb{N}$ such that given $(x,d) \in \Sigma^*\times \mathbb{N}$ the algorithm correctly decides  whether $(x,d) \in L$ in
$f(d)· |(x,d)|^{O(1)}$ time.

A {\it kernelization} algorithm is a polynomial time algorithm that takes an instance $(x, d)$ of a parameterized
problem $L$ as input and outputs an equivalent instance $(x', d')$ of $L$ such that $d'\leq h(d)$ for some computable function $h$ and $d'\leq d$. If
$h$ is a polynomial, we say that $(x', d')$ is a polynomial kernel.

\section{Results: Parameterized Complexity}\label{gc-dc}
\subsection{Parameterization by Distance to Clique}
For a graph $G$, the distance to clique of $G$ is the smallest number $d$ such that there
is a set $X \subseteq V(G)$ of size $d$ with $G- X$ being a clique. 
In this section, we present a polynomial kernel for \oc{}~ parameterized by the distance to clique. 
\begin{theorem}\label{thm:c-kernel}
\oc{} parameterized by the distance to clique $d$ admits a kernel with $O(d^3)$ vertices. 
\end{theorem}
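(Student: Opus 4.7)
The plan is to exploit the structural decomposition $V(G) = X \cup C$, where $X$ is a distance-to-clique modulator with $|X| = d$ and $C = V(G) \setminus X$ is a clique. Since $C$ is a clique, its vertices receive pairwise distinct colors in any proper coloring, so at most $d$ colors can appear on both $C$ and $X$. I partition $C$ into \emph{type classes} $C_T := \{v \in C : N(v) \cap X = T\}$, one for each subset $T \subseteq X$.

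The central reduction rule is: if $|C_T| > 2d + 3$ for some $T$, delete some $v \in C_T$ and decrement $k$ by one. For the forward direction, given an odd $k$-coloring $f$ of $G$, I would choose $v \in C_T$ so that (a) the color $f(v)$ is not used on any vertex of $X$, and (b) $f(v)$ is not the \emph{unique} odd color of any $u \in X$. Because clique colors are distinct, at most $d$ vertices of $C_T$ violate (a), and at most $d$ violate (b) (at most one bad $v$ per $u \in X$, since any unique odd color of $u$ lies on at most one vertex of $C$); so a good $v$ exists whenever $|C_T| > 2d$. The restriction $f|_{G - v}$ is then a proper $(k-1)$-coloring. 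Each clique vertex $w \in C \setminus \{v\}$ keeps an odd color because at least $|C| - 2 - d$ colors appear exactly once in $C \setminus \{w, v\}$ and do not lie in $f(N(w) \cap X)$, while each $u \in X$ retains an odd color by the choice in~(b).

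For the reverse direction, extend an odd $(k-1)$-coloring $f'$ of $G - v$ to $G$ by assigning $v$ a brand new color $c^{*}$. The extension is proper since $c^{*}$ is fresh. For $v$'s own odd condition, at least $|C| - 1 - |T|$ of the distinct colors on $C \setminus \{v\}$ do not lie in $f(T)$, and each such color appears exactly once in $N(v)$; this is positive whenever $|C| \geq d + 2$, which is implied by the trigger $|C_T| > 2d + 3$. For any other vertex $w$ with $v \in N(w)$, the fresh color $c^{*}$ becomes a new odd color of $w$, while vertices non-adjacent to $v$ are unaffected. Hence $(G, k)$ is a yes-instance iff $(G - v, k - 1)$ is, and the reduction rule is safe.

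The main obstacle is passing from the $O(d)$ per-type bound to an $O(d^3)$ bound on $|V(G)|$: in principle, $|C_T| \leq O(d)$ for every $T$ still allows a total as large as $O(d \cdot 2^d)$, since there are $2^d$ subsets of $X$. To reach $O(d^3)$ I would supplement the per-type rule with either a global version of the same argument — the ``good vertex'' selection over $C$ as a whole triggers whenever $|C| > 2d$, collapsing the kernel further — or a structural argument that, in any reduced instance, only $O(d^2)$ types $T$ can have $C_T \neq \emptyset$. Identifying the right combination to match the claimed $O(d^3)$ bound (and handling the mild subtlety that the reduced $G - v$ must remain connected, which adds only $O(d)$ further ``bad'' vertices to exclude) is the crux of the proof.
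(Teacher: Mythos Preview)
Your proposal has the gap you yourself flag, but the fix is already in your hands: the ``global version'' you mention in passing is the whole proof. Nothing in your forward-direction selection actually uses the type $T$; the counting --- at most $d$ clique vertices violate~(a), at most $d$ violate~(b), and at most $d$ are cut vertices of $G$ lying in $C$ (one per connected component of $G[X]$ whose only attachment to $C$ is a single vertex) --- works over all of $C$ simultaneously. So whenever $|C| > 3d$ a safe vertex exists, and iterating yields $|C| \le 3d$ and hence a kernel with $O(d)$ vertices, linear rather than cubic. The per-type rule is an unnecessary detour, and your second suggestion (that only $O(d^2)$ types are nonempty after reduction) is neither needed nor evidently true.

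The paper's proof is structurally quite different. It partitions $X$, not $C$, by degree in $C$ into $X_\ell$ (low), $X_m$ (medium), $X_h$ (high). Reduction Rule~1 deletes $X_m$ wholesale and attaches a pendant to each $x \in X_\ell$ that had an $X_m$-neighbour. Reduction Rule~2 then shows that the set $C_1 \subseteq C$ of clique vertices adjacent to all of $X_h$ and to none of $X_\ell$ has size at least $n - O(d^2)$, and deletes all but $d{+}1$ of them in one batch, decreasing $k$ by the number removed. The residual instance has at most $d^3 + 2d^2$ vertices. Your one-vertex-at-a-time deletion is more elementary and gives a tighter bound; the paper's batch approach is more intricate without a compensating gain.
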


\begin{proof}
	Let $(G,X, d, k)$ be an instance of \oc{}, where
	$X \subseteq V(G)$ is a set of size $d$ 
 such that the graph induced on $C=V(G) \setminus X$ is a clique of size $n-d$. 
	We assume that $k \geq |C|$, otherwise the given instance is a trivial no-instance as we need $|C|$ colors to color the clique $G[C]$.  Further, we assume 
 that $|C| > d^2+d+1$, otherwise we trivially get a kernel of size at most $d^2+2d+1$. 
 
 Let $C_N=\{v \in C~|~ N(v) \cap X  =\emptyset\}$. If $|C_N| \geq d$, 
 then the given instance is a yes-instance and we obtain a trivial kernel. 
 Therefore, we assume that $|C_N|<d$. 
	We  partition the vertices of $X$ based on their degree in the clique $G[C]$. Let $X_{\ell}= \{x \in X : |N(x) \cap C| \leq d-1\}$, $X_{m}= \{x \in X : d \leq |N(x) \cap C| \leq n-d^2-d-1\}$ and $X_h= \{x \in X : |N(x) \cap C| \geq n-d^2-d\}$. 
	Let $X^m_{\ell}=\{x \in X_{\ell}~|~ N(x) \cap X_m \neq \emptyset\}$.  
	\begin{Reduction}
		Delete the vertices of $X_{m}$ from $G$ and for each vertex $x \in X^m_{\ell}$ add a pendant vertex $x'$ adjacent to $x$.  The new instance is $(G + X'_{\ell}-{X_{m}}, (X \cup {X'_{\ell}}) \setminus X_m, d-|X_m|+|X'_{\ell}|, k)$, where $X'_{\ell}$ denotes the set of pendant vertices. 
  
	\end{Reduction}
	\begin{lemma}
		Reduction Rule~1 is safe.
	\end{lemma}
	\begin{proof}
		\emph{Forward direction.} Let $f$ be a $k$-odd coloring of $G$. Let $G'= G+ X'_{\ell}-{X_{m}}$. We define a $k$-coloring $g$ of $G'$  as follows: 
  \begin{itemize}
      \item for each  $v \in X_{\ell}\cup X_h \cup C$, set $g(v)=f(v)$, and  
      \item for each $v \in X'_{\ell}$,  assign $g(v)=c$,   
      where $c$ is an arbitrary color in $[k]-f(N[u])$, and $u$ is the unique neighbor of $v$ in $G'$.  
     
  \end{itemize}
 
		Clearly, $g$ is  a proper $k$-coloring of $G'$.
		Next we argue that $g$ is an odd coloring of $G'$. Consider a vertex $v \in X_h$.  Since $v$ has at least $d$ neighbors in $C$, there exists at least one color in $g(N(v))$ that appears exactly once in the neighborhood of $v$. Thus $v$ has an odd color with respect to $g$. 
  Consider a vertex $v \in C$. As $|C|>d^2+1$, $v$ also has an 
  odd color with respect to $g$. Consider a vertex $v \in X_{\ell}$. 
		If $v \in X^m_{\ell}$, then the color of the pendant vertex adjacent to $v$ appears exactly once in the neighborhood of $v$ and $v$ has an odd color with respect to $g$.   
  If $v \notin X^m_{\ell}$, then, we have that  $f(N(v))=g(N(v))$, and $v$ has an odd color with respect to $g$. 
  Finally, if $v \in X'_{\ell}$, then $v$ has degree one and therefore trivially has an odd color. From the above cases, we conclude that $g$ is an odd coloring of $G'$.

		\emph{Reverse direction.}
		Let $g$ be a $k$-odd coloring of $G'$.
		We define a $k$-coloring $f$ of $G$ as follows: 
  \begin{itemize}
      \item for each $v \in X_{\ell} \cup X_h \cup C$, we assign $f(v)=g(v)$,  
     
     \item  for $v \in X_m$, we have  
      $|N(v) \cap C| \leq n-d^2-d-1$. That is, each vertex $v \in X_m$ is not adjacent to at least $d^2+1$ vertices in $C$.   As $|X_{\ell}|+|X_h|+|N(X_{\ell})\cap C| \leq (d-|X_m|)+d(d-1)$, 
      we color the vertices of $X_m$ with $|X_m|$ many distinct colors, which are different from the colors used in $X_\ell \cup X_h \cup (N(X_{\ell}) \cap C) \cup (N(X_{m}) \cap C)$  in $f$. 
      
  \end{itemize}
  Next we argue that $f$ is an odd coloring of $G$.
		For each $v \in X_m$, we have 
		$|N(v) \cap C| \geq d$, therefore, there is at least one color used for vertices of $C$ appears exactly once in $N(v)$. 
  Hence $v$ has an odd color. 
		For each $v \in X^m_{\ell}$, every color in $f(N(v) \cap X_m)$ acts as a odd color. For each $v \in X_{\ell} \setminus X^m_{\ell}$, $f(N(v))=g(N(v))$, so $v$ retains its odd color. 
  \qed
	\end{proof}
	
	We assume that $X_h\neq \emptyset$; otherwise, by applying the above reduction rule and using the properties of $X_{\ell}$, we trivially get a kernel of size at most $d(d-1)+d$.  
	
	Let $D_h=\bigcap\limits_{x \in X_h}(N(x) \cap C)$ and $D_{\ell}=\bigcup\limits_{x \in X_{\ell}}(N(x) \cap C)$. It is easy to see that $|D_{\ell}| \leq d(d-1)$. Let $C_1=D_h \setminus D_{\ell}$.

	\begin{lemma}~\label{lem-size}
		The size of $D_h$ is at least $n-(|X_h|d+1)d$.
	\end{lemma}
	\begin{proof}
		As each vertex of $X_h$ has at least $n-d^2-d$ neighbors in $C$, the
		number of edges $|E[X_h,C]|$ between $X_h$ and $C$ is at least $|X_h|(n-d^2-d)$. 
		Suppose $|D_h|= n-(|X_h|d+1)d-r$ for some $r \geq 1$. 
  Then the number of edges $|E[X_h,D_h]|$ between $X_h$ and $D_h$ equal to  $|X_h|(n-(|X_h|d+1)d-r)$. Since each vertex in $C \setminus D_h$  has at most $|X_h|-1$ neighbors in $X_h$,  the number of edges $|E[X_h,C\setminus D_h]|$ between $X_h$ and $C \setminus D_h$ is at most $(|X_h|-1)(|X_h|d^2+r)$. Hence, we have
%
	\begin{equation*}
	\begin{aligned}
		|E[X_h,C]| \leq {} & |E[X_h,D_h]|+ |E[X_h,C\setminus D_h]| \\
	& \leq  |X_h|(n-(|X_h|d+1)d-r)+(|X_h|-1)(|X_h|d^2+r) \\
	& =|X_h|(n-d^2-d)-r
	\end{aligned}
	\end{equation*}
	which is contradiction to the fact that $|E[X_h,C]| \geq |X_h|(n-d^2-d)$. 
 \qed
	\end{proof}
	
    Therefore, we have $|D_h| \geq n-(|X_h|d+1)d$. 
  
    Hence $|C_1| \geq n-(|X_h|d+1)d -d(d-1) \geq n-(|X_h|+1)d^2$. Let $C_2 \subseteq C_1$ be a set of $d+1$ arbitrary vertices. Let $C'=C_1 \setminus C_2$.

	\begin{Reduction}
		 Delete the vertices in $C'$ from $G$ and decrease $k$ by $|C'|$. The new instance is $(G - C', X, d, k-|C'|)$.
	\end{Reduction}
	\begin{lemma}
	Reduction Rule~2 is safe.
\end{lemma}
	\begin{proof}
		 \emph{Reverse direction.}
		 If $g$ is an odd $(k-|C'|)$-coloring of $G - C'$ then we can obtain an odd $k$-coloring $f$ of $G$ from $g$ by coloring $C'$ with $|C'|$ many distinct new colors.

		For the other direction, let $f$ be an odd $k$-coloring of $G$. Assume that $f(X_{\ell} )\cap f(C') = \emptyset$. If not, recolor the vertices in $X_{\ell}$ using  $|X_{\ell}|$ many distinct colors from $f(C_2)$. 
		Observe that $N(C')=X_h \cup (C\setminus C')$, that is, in the coloring $f$ of $G$, each vertex of $C'$ receives a color different from each color in $X_{\ell} \cup X_h\cup (C\setminus C')$.  Therefore $f$ restricted to $G - C'$ is a proper $(k-|C'|)$-coloring. 
  We now argue that $f$ is also an odd coloring of $G - C'$. 	Consider a vertex $v \in V(G) \setminus C'$, if $v \in X_\ell$ then the odd color of $v$ in $G - C'$ is same as the odd color of $v$ in $G$. If $v \in (X_h \cup (C \setminus C'))$, then as $C_2$ has $d+1$ vertices,  at least one color $c$ from $C_2$ is not used in $X_\ell \cup X_h$ by the coloring $f$. This color $c$ acts as an odd color for the vertex $v$. 
  \qed
	\end{proof}
	
	It is easy to see that both reduction rules are applicable in polynomial time. Next, we show that the size of the reduced instance is at most $d^3+2d^2$.
	
	\begin{lemma}
		Let $(G,X,d,k)$ be an instance of \oc{} obtained after applying the above reduction rules. If it is a yes-instance then $|V(G)|\leq d^3+2d^2$. 
	\end{lemma}
	\begin{proof}
			We know that the size of $C'$ is at least $n-(|X_h|+1)d^2-d$. Therefore the number of vertices in $X_h \cup X_{\ell}\cup D_{\ell} \cup (C\setminus C')$ is at most $d+d(d-1)+(|X_h|d^2+d^2)$, which is at most $d^3+2d^2$.  
   \qed
	\end{proof}
\end{proof}
\subsection{Parameterization by Vertex Cover Number}
First, we introduce a notion of polynomial-parameter transformation, which, under 
the assumption that $\NP \nsubseteq co\NP/poly$, allows us to show the non-existence of polynomial kernels.
\begin{definition}\cite{bodlaender2011kernel}
 Let $P$ and $Q$ be parameterized problems. We say that there is a polynomial-parameter transformation from $P$ to $Q$, denoted $P \leq _{ppt} Q$,
   if there exists a polynomial time computable function $f: \{0,1\}^* \times \mathbb{N} \rightarrow \{0,1\}^* \times \mathbb{N}$ and
 a polynomial $p: \mathbb{N} \rightarrow \mathbb{N}$ such that for all $x \in \{0,1\}^* $ and $k \in \mathbb{N}$, the following hold:
 \begin{enumerate}
  \item $(x,k) \in P$ if and only if $(x',k')=f(x,k) \in Q$ 
  \item $k' \leq p(k)$
 \end{enumerate}
 The function $f$ is called a polynomial-parameter transformation from $P$ to $Q$.
\end{definition}
\begin{theorem} \cite{bodlaender2011kernel}
Let $P$ and $Q$ be parameterized problems, and suppose that $P^c$ and $Q^c$ are
the derived classical problems. Suppose that $P^c$
is $\NP$-complete, and $Q^c \in \NP$. Suppose
that $f$ is a polynomial time and parameter transformation from $P$ to $Q$. Then, if $Q$ has a
polynomial kernel, then $P$ has a polynomial kernel.
\end{theorem}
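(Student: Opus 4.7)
The plan is to chain three algorithms: the given polynomial-parameter transformation $f$ from $P$ to $Q$, the hypothesized polynomial kernelization $K_Q$ for $Q$, and a polynomial-time classical many-one reduction from $Q^c$ back to $P^c$ supplied by the \NP-completeness of $P^c$. The composition will take an arbitrary $P$-instance and shrink it to a $P$-instance of size polynomial in the parameter.

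Concretely, given $(x,k) \in P$, I would first apply $f$ to obtain an instance $(x', k') = f(x,k)$ of $Q$ with $k' \leq p(k)$ for the polynomial $p$ from the PPT. Then apply $K_Q$ to $(x', k')$ to get an equivalent instance $(x'', k'')$ of $Q$ with $|x''| + k'' \leq q(k')$ for some polynomial $q$; chaining gives $|x''| + k'' \leq q(p(k))$, still polynomial in $k$. At this point we have an equivalent $Q$-instance of the desired size, but we still need to return an instance of $P$.

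This is where the classical reduction enters. Since $Q^c \in \NP$ and $P^c$ is \NP-complete, standard Karp reducibility yields a polynomial-time many-one reduction $r$ from $Q^c$ to $P^c$. Applying $r$ to (the string encoding of) $(x'', k'')$ produces a string $y$ with $|y|$ polynomial in $|(x'', k'')|$, hence polynomial in $k$. Under the standard identification of $P^c$ with encoded parameterized $P$-instances, $y$ parses as some $(z, \ell)$, and we output $(z, \ell)$. Correctness follows from the equivalence chain $(z, \ell) \in P \Leftrightarrow y \in P^c \Leftrightarrow (x'', k'') \in Q^c \Leftrightarrow (x,k) \in P$, while the size bound $|z| + \ell \leq |y|$ remains polynomial in $k$, yielding the desired polynomial kernel for $P$.

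The main technical obstacle is this last parsing step: the output $y$ of the Karp reduction is a priori just a string in $\{0,1\}^*$, and we need it to be a valid encoding of a parameterized $P$-instance. This is typically handled by fixing the convention $P^c = \{\langle z, \ell \rangle : (z, \ell) \in P\}$ and, if necessary, padding the output of $r$ so that every output lies in a syntactically well-formed range (yes-instances and no-instances are mapped to the corresponding parts of $P^c$'s complement). Apart from this encoding caveat, the remainder of the proof is a routine composition argument gluing the three polynomial bounds together.
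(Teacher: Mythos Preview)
The paper does not supply its own proof of this theorem: it is quoted verbatim from \cite{bodlaender2011kernel} and used as a black box, with no argument given. Your proposal reconstructs exactly the standard proof from that reference---compose the polynomial-parameter transformation with the assumed kernel for $Q$, then use the Karp reduction from $Q^c$ back to $P^c$ guaranteed by \NP-completeness of $P^c$---and it is correct. The encoding caveat you flag is real but is handled the way you describe (fix a pairing convention for $P^c$ and, if needed, reroute ill-formed outputs to a fixed trivial instance of the appropriate yes/no type); this is routine and does not constitute a gap.
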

\begin{corollary}
 Let $P$ and $Q$ be parameterized problems whose unparameterized
versions $P^c$ and $Q^c$ are $\NP$-complete. If $P \leq_{ppt} Q$ and $P$ does not have a
polynomial kernel, then $Q$ does not have a polynomial kernel.
\end{corollary}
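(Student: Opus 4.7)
The plan is to prove this by direct contraposition of the preceding theorem. The statement to be established is essentially the logical contrapositive of the theorem just stated, and so no additional combinatorial or algorithmic work will be required; the whole argument is a one-line implication once we verify that the hypotheses line up correctly.

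First, I would assume, for the sake of contradiction, that $Q$ does admit a polynomial kernel. I then aim to apply the preceding theorem with $P$ and $Q$ in the same roles. To do so, I must check the three hypotheses: (i) there is a polynomial-parameter transformation $f$ from $P$ to $Q$; (ii) the unparameterized problem $P^c$ is $\NP$-complete; and (iii) the unparameterized problem $Q^c$ lies in $\NP$. Hypothesis (i) is given directly in the statement of the corollary, and hypothesis (ii) is part of the assumption that both $P^c$ and $Q^c$ are $\NP$-complete. For hypothesis (iii), I would simply remark that $\NP$-completeness of $Q^c$ implies $Q^c \in \NP$ by definition, so the condition is automatically met.

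Having verified all three hypotheses, the preceding theorem yields that $P$ has a polynomial kernel. But this directly contradicts the standing assumption that $P$ does not admit a polynomial kernel. Therefore, the assumption that $Q$ has a polynomial kernel is untenable, and $Q$ cannot admit a polynomial kernel either.

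There is no real obstacle here: the entire content of the corollary is the observation that the hypothesis ``$Q^c \in \NP$'' in the preceding theorem is automatically implied by the (stronger) assumption of $\NP$-completeness of $Q^c$, after which taking the contrapositive gives exactly the claimed statement. The only mild care required is to note that a polynomial-parameter transformation, by definition, is in particular a polynomial-time-and-parameter transformation in the sense used by the preceding theorem, so the hypothesis on $f$ transfers verbatim.
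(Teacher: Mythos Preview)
Your proposal is correct and matches the paper's treatment: the paper states this corollary without proof, as it is indeed the immediate contrapositive of the preceding theorem once one notes that $\NP$-completeness of $Q^c$ gives $Q^c \in \NP$. Your verification of the hypotheses is exactly the right (and only) thing to check.
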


\begin{theorem}\cite{bodlaender2011cross}
    \textsc{Graph Coloring} parameterized by the size of a vertex cover does not
admit a polynomial kernel  unless 
${\sf NP} \subseteq {\sf Co {\text -} NP/poly}$. 
\end{theorem}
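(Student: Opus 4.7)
The plan is to invoke the OR-cross-composition framework of Bodlaender, Jansen, and Kratsch. Recall that an OR-cross-composition from an $\NP$-hard problem $L$ into a parameterized problem $(Q,\kappa)$ --- that is, a polynomial-time algorithm taking $t$ instances $x_1,\dots,x_t$ of $L$ equivalent under some polynomial equivalence relation and producing an instance $(y,k)$ of $Q$ such that $(y,k)\in Q$ iff some $x_i\in L$, with $k$ bounded polynomially in $\max_i |x_i| + \log t$ --- implies that $Q$ admits no polynomial kernel unless ${\sf NP} \subseteq {\sf Co {\text -} NP/poly}$. Hence the task reduces to exhibiting such a cross-composition from an $\NP$-complete source problem into \textsc{Graph Coloring} parameterized by the vertex cover number.

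First I would take the source to be $3$-\textsc{Coloring}, restricted to graphs on a fixed vertex set $[n]$ (still $\NP$-complete), and use the equivalence relation that identifies two instances iff they share the same vertex set $[n]$. Given $t$ equivalent instances $G_1,\dots,G_t$, I would build a host graph $H$ comprising three blocks: a shared set $V=\{v_1,\dots,v_n\}$ playing the role of the common vertex set; a \emph{selector} gadget on $O(\log t)$ vertices whose coloring encodes an index $i\in [t]$; and \emph{edge-check} gadgets that, for every pair $u,v$, activate the constraint $f(v_u)\neq f(v_v)$ precisely when $uv\in E(G_i)$ and the selector encodes $i$. The desired number of colors $k$ would be a small polynomial in $n$, and I would declare $V$ together with the selector to be a vertex cover of $H$ of size $\mathrm{poly}(n+\log t)$, so that the remaining vertices sit in an independent set of attached pendants/checkers.

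The technical heart is the edge-check gadget. For each potential edge $uv$ and each index $i$, an auxiliary vertex $w_{uv,i}$ is forced --- via adjacencies to all but two of the selector's "palette" colors --- to pick up a color in $\{f(v_u),f(v_v)\}$ exactly when the selector encodes $i$; if $uv\in E(G_i)$, then $w_{uv,i}$ is additionally adjacent to both $v_u$ and $v_v$, creating an unavoidable conflict iff $f(v_u)=f(v_v)$. For indices $i'\neq i$, the same vertex can harmlessly reuse a selector color and impose no constraint on $V$. The selector itself would be realized as $\lceil \log t\rceil$ pairs of adjacent "bit" vertices living on fresh colors, so that its coloring deterministically spells out one index $i$ and switches on precisely the block of checks corresponding to $G_i$.

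The main obstacle is arranging these three blocks so that (i) the selector coloring indeed encodes a unique $i$, (ii) the edge-check gadgets are \emph{dormant} for every $i'\neq i$ and strictly \emph{active} for $i$, and (iii) the vertex cover remains polynomial in $n+\log t$ rather than growing with $t$. Once the forward and reverse directions of correctness are verified for this construction, the Bodlaender--Jansen--Kratsch cross-composition theorem immediately yields the stated conclusion: a polynomial kernel for \textsc{Graph Coloring} parameterized by vertex cover would force ${\sf NP} \subseteq {\sf Co {\text -} NP/poly}$.
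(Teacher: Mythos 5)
This statement is an imported result: the paper cites it from Bodlaender, Jansen and Kratsch and gives no proof of its own, so there is nothing internal to compare against. Your overall strategy --- an OR-cross-composition from \textsc{3-Coloring} with a shared vertex set, an $\OO(\log t)$-vertex instance selector, and an independent set of edge-checker vertices that keeps the vertex cover at $\mathrm{poly}(n+\log t)$ --- is indeed the route taken in the cited reference, so you have identified the right tool and the right source problem.

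However, as written the proposal is a plan rather than a proof, and the gap sits exactly at what you yourself call the technical heart. The phrase ``forced via adjacencies to all but two of the selector's palette colors'' conflates colors with vertices: adjacency can only be declared to vertices, so you must exhibit concrete palette/selector vertices, a mechanism (e.g.\ a clique of dummies or degree constraints) that forces each of the $\lceil\log t\rceil$ bit-pairs to spend exactly one ``special'' color, and then specify the checker $w_{uv,i}$'s neighborhood so that (a) when the selector encodes $i'\neq i$ at least one special color is guaranteed to be free for $w_{uv,i}$, and (b) when the selector encodes $i$ every special color is blocked, leaving only $\{f(v_u),f(v_v)\}$ unavailable and hence forcing $f(v_u)\neq f(v_v)$. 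None of this is pinned down, and the forward/reverse correctness is explicitly deferred (``once \dots verified''). You should also make $t$ a power of two by padding, fix the exact number of colors $k$ (it must be consistent across both the ``some $G_i$ is $3$-colorable'' and ``no $G_i$ is $3$-colorable'' cases), and check that the checkers really form an independent set with all neighbors inside the claimed cover. Until the gadget is written out and both directions are argued, the conclusion does not follow; with those details supplied, the argument becomes the standard one from the cited paper.
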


\begin{theorem}
 \oc{} parameterized by vertex cover number does not
admit a polynomial kernel, unless 
${\sf NP} \subseteq {\sf Co {\text -} NP/poly}$. 
\end{theorem}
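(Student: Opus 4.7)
The plan is to establish the lower bound via a polynomial parameter transformation (PPT) from \textsc{Graph Coloring} parameterized by vertex cover to \odc{} parameterized by vertex cover, and then to apply the corollary stated earlier. The cited theorem of Bodlaender et al.\ rules out a polynomial kernel for \textsc{Graph Coloring} under this parameterization; since \odc{} is in \NP{} and \NP-hard, the corollary will transfer the impossibility to \odc{}.

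Given an instance $(G, k)$ of \textsc{Graph Coloring} together with a vertex cover $X$ of $G$ of size $d$, I would construct the instance $(G', k+2)$ of \odc{} as follows. Obtain $G'$ from $G$ by adding two new vertices $u_1, u_2$, making them adjacent to each other, and making each of them adjacent to every vertex of $V(G)$. Then $X' = X \cup \{u_1, u_2\}$ is a vertex cover of $G'$ of size $d+2$, so the parameter grows by an additive constant, which is certainly polynomial. The graph $G'$ is connected by construction, matching the standing assumption under which odd coloring is defined.

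The correctness statement to prove is the equivalence $\chi(G) \leq k \iff \chi_o(G') \leq k+2$. For the forward direction, I would extend a proper $k$-coloring $c$ of $G$ by setting $c'(u_1) = k+1$ and $c'(u_2) = k+2$; properness is immediate, every $v \in V(G)$ sees color $k+1$ exactly once in its neighborhood (namely at $u_1$), and $u_1, u_2$ serve as each other's odd witnesses. For the reverse direction, since $u_1$ and $u_2$ are universal to $V(G)$ in $G'$, any proper $(k+2)$-coloring $c'$ of $G'$ must assign them two distinct colors, both different from every color appearing on $V(G)$; hence the restriction of $c'$ to $V(G)$ uses at most $k$ colors and is a proper $k$-coloring of $G$.

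I do not foresee a serious obstacle here: the universal pair $\{u_1, u_2\}$ does two jobs at once, simultaneously forcing the original part of the graph to use only $k$ colors and supplying a trivial odd witness at every vertex of $G'$ (including at $u_1, u_2$ themselves). Once this transformation is in place, the conclusion follows immediately from the PPT corollary combined with the cited non-existence of a polynomial kernel for \textsc{Graph Coloring} parameterized by vertex cover.
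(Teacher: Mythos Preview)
Your proof is correct, and the transformation is clean: the two adjacent universal vertices simultaneously force two extra colors off of $V(G)$ and act as unique-color odd witnesses for every vertex of $G'$, including each other. The equivalence $\chi(G)\le k \iff \chi_o(G')\le k+2$ holds exactly as you argue, and the vertex cover grows by two, so the PPT goes through.

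The paper takes a different route. Rather than supplying explicit odd witnesses, it massages the instance so that \emph{every} vertex of the target graph $H$ has odd degree (by attaching pendants to odd-degree vertices in the cover, adding a vertex $z$ over the odd-degree independent-set vertices, and a single universal vertex $u$), which makes any proper coloring automatically an odd coloring; the reduction then yields $\chi(G)\le k \iff \chi_o(H)\le k+1$. Your construction is considerably simpler and avoids all the parity bookkeeping, at the cost of using $k+2$ rather than $k+1$ colors---irrelevant for the PPT, since the parameter is the vertex cover size. The paper's approach, on the other hand, isolates the reusable principle ``odd-degree-everywhere collapses odd coloring to proper coloring,'' which it also exploits in the clique-width hardness proof.
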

\begin{proof}
 We prove the result by showing a polynomial-parameter transformation from the 
  \textsc{Graph Coloring}  problem, 
parameterized by vertex cover number to \oc{} parameterized by vertex cover number for any constant $k  > 2$. Let $(G,X_g,k,d)$ be an instance of \textsc{Graph Coloring}, where $X_g$ is a vertex cover of $G$ of size $d$, and $k$ is the number of allowed colors.
 We construct an instance $(H,X_h,k',d')$ of \oc{} as follows. Without loss of generality, we assume that $|V(G)|$ is odd, otherwise, we will add an isolated triangle to $G$, which may increase the size of the vertex cover by two.  
\begin{enumerate}
 \item Initialize $H$ as a copy of $G$.

 \item Let $I_o$ be the set of vertices in $V(G) \setminus X_g$ that have odd degree in $G$.  Without loss of generality, assume that $I_o$ is odd; if not, we will add an isolated edge to $G$, which may increase the size of the vertex cover by one. Add a new vertex $z$ and make it adjacent to every vertex of $I_o$.
  \item For each odd degree vertex $v \in X_g$, attach a pendant vertex $v'$. 
 \item Finally, add a vertex $u$ and make it adjacent to every vertex in $V(G)$. 
\end{enumerate}
 It is easy to verify that all vertices of $H$ are of odd degree. The vertex cover of $H$ is $X_g \cup\{u,z\}$ which has size at most $d+2$.
 We show that $(G,X_g,k,d)$ is yes instance of \textsc{Graph Coloring} if and only of $(H,X_h,k',d')$ is a yes instance of \textsc{Odd Coloring}, where $k'=k+1$. 

 \emph{Forward direction.} Suppose $G$ has a $k$-coloring $f: V(G) \rightarrow [k]$, we construct a $(k+1)$-odd coloring $g$ of $H$ as follows. For each $v \in V(G)$
 set $g(v)=f(v)$. For all other vertices  set $g(v)=k+1$. It is easy  to see that $g$ is a proper coloring of $H$. Since every vertex of $H$ has odd degree, $g$ is also an odd coloring of $H$.

 \emph{Reverse direction.} Suppose $H$ has a $(k+1)$-coloring $g: V(H) \rightarrow [k+1]$. Without loss of generality, assume  $g(u)=k+1$. Then, for each vertex $v \in V(G)$, $g(u) \neq g(v)$. Therefore, $g$ restricted to the vertices of $V(G)$ gives a $k$-coloring of $G$.  
\qed
\end{proof}

\subsection{Parameterization by Clique-width}\label{sec:cw}
The \emph{clique-width} of a graph $G$  denoted by $cw(G)$, is defined as the minimum number of labels required to construct $G$ 
using the following four operations:
\begin{enumerate}
	\setlength{\itemsep}{1pt}
	\setlength{\parskip}{0pt}
	\item [(i).] \emph{Introducing a vertex.} $\Phi=v(i)$, creates a new vertex $v$ with label $i$. $G_{\Phi}$ is a graph consisting a single vertex $v$ with label $i$. 
	\item [(ii).] \emph{Disjoint union.} $\Phi=\Phi' \oplus \Phi''$,  $G_{\Phi}$ is a disjoint union of labeled graphs $G_{\Phi'}$ and $G_{\Phi''}$
	\item [(iii).]\emph{Introducing edges.} $\Phi=\eta_{i,j}(\Phi')$, connects every vertex with label $i$ to every vertex with label $j$ ($i \neq j$) in $G_{\Phi'}$.
	\item [(iv).]  \emph{Renaming labels.} {\bf $\Phi= \rho_{i\rightarrow j}(\Phi')$}: each vertex of label $i$ is changed to label $j$ in $G_{\Phi'}$.
\end{enumerate}
An  expression build using these  four operations using $w$ labels is called as $w$-$expression$. 
In other words, the {\it clique-width} of a graph $G$, is the minimum $w$ for which there exists a $w$-expression that defines the graph $G$. 
A $w$-expression $\Psi$ is a \emph{nice} $w$-expression of $G$, if no edge is introduced more than once in $\Psi$.

\begin{theorem}[\cite{fomin2010intractability}]
    The \textsc{Graph Coloring} problem is $W[1]$-hard when parameterized by clique-width.
\end{theorem}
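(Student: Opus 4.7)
The plan is to prove $W[1]$-hardness by a parameter-preserving reduction from a known $W[1]$-hard problem to \textsc{Graph Coloring} with the clique-width of the produced graph bounded by a function of the source parameter. A natural choice of source is \textsc{Multicolored Clique}: given a graph $G$ with a partition $V(G)=V_1 \uplus \cdots \uplus V_k$, decide whether $G$ contains a clique with exactly one vertex from each $V_i$. The reduction should construct an auxiliary graph $H$ together with a target number of colors $q=q(k)$ so that $H$ is properly $q$-colorable if and only if $(G,V_1,\dots,V_k)$ has a multicolored clique, and $\mathrm{cw}(H) \le g(k)$ for a computable $g$.

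My construction would use two families of gadgets. For each part $V_i$, I would attach a \emph{selection gadget} whose only proper $q$-colorings correspond to choosing a single vertex from $V_i$; the trick is to arrange a ``palette'' of $|V_i|$ mandatory colors and a bounded number of ``slot'' vertices so that each valid coloring uniquely identifies a selected vertex via the missing color on a designated slot. For each pair $\{i,j\}$, I would add an \emph{edge-checking gadget} which, given the colors assigned to the slots of $V_i$ and $V_j$, forces non-properness unless the pair of selected vertices forms an edge of $G$. A clean way to encode the latter is to introduce, for each non-edge $uv$ of $G[V_i \cup V_j]$ with $u \in V_i$, $v \in V_j$, a ``forbidder'' path (or small caterpillar) that can be properly extended only if the slot colors avoid the specific combination corresponding to $(u,v)$. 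Correctness then reduces to verifying that a proper $q$-coloring exists exactly when the selected vertices induce a multicolored clique.

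For the clique-width bound, I would build a nice $w$-expression compositionally: each selection gadget and each edge-checking gadget can be assembled using $O(1)$ labels and then relabeled into a small pool of ``output'' labels (one per part, plus a few auxiliary ones). After all $k$ selection gadgets and $\binom{k}{2}$ edge-checking gadgets are assembled via disjoint union, the only joins needed are between slot vertices and their gadgets, which can be performed using label classes indexed by the parts. A careful accounting should give $\mathrm{cw}(H) = O(k)$ or $O(k^2)$, which is the desired bound on the new parameter.

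The principal obstacle is the well-known tension between clique-width and pairwise specificity: the $\eta_{i,j}$ operation connects \emph{all} vertices of label $i$ to \emph{all} vertices of label $j$, which makes it delicate to attach an edge-checking gadget between the $i$-th and $j$-th selection gadgets without also introducing spurious adjacencies to all their internal vertices. I expect the bulk of the proof to consist of a careful ordering of disjoint-union, edge-introduction, and relabel operations, ensuring that each internal vertex of a gadget is relabeled to an ``inert'' label before the gadget is joined to the rest of $H$. Once that bookkeeping is done, the remaining verification -- that a multicolored clique yields a proper $q$-coloring and conversely -- is routine from the gadget specifications.
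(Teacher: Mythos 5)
First, a point of reference: the paper does not prove this statement at all; it is imported verbatim from Fomin, Golovach, Lokshtanov and Saurabh \cite{fomin2010intractability} and used as a black box for the reduction in Section~\ref{sec:cw}. So there is no in-paper argument to compare against, and your proposal has to stand on its own as a reconstruction of a known but genuinely difficult result.

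As it stands, the proposal has a real gap, and it is exactly at the point you flag as ``bookkeeping.'' Your edge-checking step introduces one forbidder gadget per non-edge of $G[V_i\cup V_j]$ and attaches it to the \emph{specific} slot vertices whose colors encode the selected vertices of $V_i$ and $V_j$. The bipartite adjacency between the slot/palette vertices and this family of forbidders then essentially reproduces an arbitrary bipartite pattern on $\Theta(|V_i|+|V_j|)$ vertices (the complement of $G[V_i,V_j]$), and arbitrary bipartite graphs have clique-width that grows with $n$, not with $k$. No ordering of disjoint unions, $\eta_{i,j}$ operations and relabelings can realize such a pattern with $f(k)$ labels, because $\eta_{i,j}$ only ever creates complete bipartite connections between label classes; pairwise-specific attachments require as many label classes as there are distinguishable attachment points. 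The same issue infects the selection gadgets: a palette of $|V_i|$ mandatory colors with a designated ``missing color'' slot needs $q=\Theta(n)$ colors (which is allowed, since $q$ is not the parameter), but then the gadget that ``forces non-properness unless the slot colors avoid the combination corresponding to $(u,v)$'' must discriminate among $\Theta(n)$ colors using only complete joins between $f(k)$ label classes, and you give no mechanism for that. This is precisely why the published proof does not attach gadgets to individual vertices but instead encodes vertex identities numerically (as cardinalities/weights realized by sets of vertices joined completely to checking gadgets), so that all required adjacencies are complete joins between $O(k^2)$ groups. Without that idea, or an equivalent one, the correctness direction of your reduction may be arrangeable, but the clique-width bound $\mathrm{cw}(H)\le g(k)$ --- which is the entire content of the theorem --- does not follow.
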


\begin{theorem}
 \oc{} is $W[1]$-hard when parameterized by clique-width.
\end{theorem}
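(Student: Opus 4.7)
The plan is to give a parameterized reduction from \textsc{Graph Coloring} parameterized by clique-width, which is $W[1]$-hard by the quoted theorem of Fomin et al., to \oc{} parameterized by clique-width. Given an instance $(G,k)$ of \textsc{Graph Coloring}, I construct $H$ by adding two new vertices $u_1,u_2$, making each of them adjacent to every vertex of $G$ and to each other, and set $k'=k+2$. The construction is clearly polynomial time.

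To bound the clique-width of $H$, take a $w$-expression $\Phi$ for $G$ with $w=cw(G)$. Extend $\Phi$ as follows using one fresh label $w+1$: take disjoint union with $v(w+1)$ (creating $u_1$), apply $\eta_{w+1,i}$ for every $i\in[w]$ (making $u_1$ universal in $G$), rename label $w+1$ to label $1$, take disjoint union with $v(w+1)$ (creating $u_2$), and apply $\eta_{w+1,i}$ for every $i\in[w]$ (making $u_2$ adjacent to every vertex of $G$ and to $u_1$). This gives $cw(H)\le cw(G)+1$, so the transformation is in fact a polynomial-parameter one.

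For correctness, in the forward direction, given a proper $k$-coloring $f$ of $G$, define $g$ on $H$ by $g|_{V(G)}=f$, $g(u_1)=k+1$, $g(u_2)=k+2$. This is a proper $(k+2)$-coloring since $u_1,u_2$ receive colors unused in $V(G)$ and differ from each other. Every $v\in V(G)$ has both $u_1$ and $u_2$ in its open neighborhood, and the colors $k+1,k+2$ each appear exactly once there, so $v$ has an odd color. The vertex $u_1$ sees $u_2$ with the unique color $k+2$, and symmetrically for $u_2$. Hence $g$ is a $(k+2)$-odd-coloring. Conversely, any $(k+2)$-odd-coloring $g$ of $H$ satisfies $g(u_1)\ne g(u_2)$, and since $u_1,u_2$ are universal, no vertex of $V(G)$ can use either of these two colors; thus $g|_{V(G)}$ is a proper coloring of $G$ using at most $k$ colors.

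The main subtlety, and the reason I prefer two universal vertices over one, is ensuring that in the forward direction every vertex of $H$ has an odd color without relying on parity of degrees (as done in the vertex-cover reduction). With a single universal vertex $u$, the odd-color condition at $u$ would require some color class to have odd size among the vertices of $V(G)$, which we cannot control from an arbitrary proper coloring of $G$. Adding a second universal vertex $u_2$ bypasses this: $u_1$ and $u_2$ serve as mutual odd-color witnesses via their private colors $k+1$ and $k+2$. The remaining verification that the construction preserves connectivity and does not create any unexpected coloring obligation is routine, so the only real technical point is the clique-width bookkeeping, which is handled by the expression above.
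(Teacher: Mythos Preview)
Your proof is correct but takes a different route from the paper. The paper reduces from \textsc{Graph Coloring} by attaching a pendant to every even-degree vertex of $G$, so that every vertex of the resulting graph $H$ has odd degree; then any proper coloring of $H$ is automatically an odd coloring, giving $\chi(G)=\chi_o(H)$ with no shift in $k$. The clique-width increases by at most one, via the local replacement $i(v)\mapsto \eta_{w+1,i}(w{+}1(v')\oplus i(v))$ for each even-degree $v$.

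Your construction instead adds two mutually adjacent universal vertices and shifts $k$ by $2$. This avoids the degree-parity argument entirely: the two fresh colors on $u_1,u_2$ serve as odd-color witnesses for every original vertex, and $u_1,u_2$ witness each other. The reverse direction is the same in spirit for both proofs (restrict to $V(G)$). Your clique-width bookkeeping is fine; the rename of $u_1$ to label $1$ before introducing $u_2$ correctly ensures $u_1u_2\in E(H)$. What the paper's approach buys is the exact equality $\chi_o(H)=\chi(G)$ (no offset in the target value), while your approach buys a uniform, degree-independent gadget that would work just as well for other ``unique/odd witness in the neighborhood'' variants; your explicit discussion of why a single universal vertex would fail is a nice touch the paper does not need because of its odd-degree trick.
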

\begin{proof}
 We give a parameterized reduction from \textsc{Graph Coloring} parameterized by clique-width. 
 Given an instance $(G,w,k)$ of  \textsc{Graph Coloring}, where $w$ denotes the clique-width of $G$, we  construct a graph $H$ from $G$ by adding a pendant vertex to every even-degree vertex of $G$. As a result, every  vertex in $H$ has odd degree.

 Let $\Phi_G$ be a $w$-expression of $G$. We obtain a $(w+1)$-expression $\Phi_H$  of $H$ from $\Phi_G$  as follows: for each even-degree vertex 
$v$ in $G$, replace the  sub-expression $i(v)$ in $\Phi_G$ with $\eta_{w+1,i}(w+1(v') \oplus i(v))$, where $v'$ is the pendant vertex adjacent to $v$ in $H$. Thus, clique-width of $H$ is at most $w+1$. Since every vertex in $H$ has odd degree, we have $\chi(G)=\chi_o(H)$. 
 That is, $(G,w,k)$ is yes instance of \textsc{Graph Coloring} if and only if  $(H,w+1,k)$ is a yes instance of \oc{}. 
 \qed
 \end{proof}

\subsection{Parameterization by Neighborhood diversity}
In this section we present a fixed-parameter tractable algorithm for \odc{} parameterized
by the  neighborhood diversity.
\begin{definition}[Neighborhood Diversity~\cite{lampis2012algorithmic}]\label{def:nd}
Let $G=(V,E)$ be a graph. 
Two vertices $u,v\in V(G)$ are said to have the \emph{same type} if and only if $N(u)\setminus \{v\}=N(v)\setminus \{u\}$. A graph $G$ has neighborhood diversity at most $t$ 
if there exists a  partition of $V(G)$ into at most $t$ sets  
$V_1, V_2, \dots, V_t$ 
such that all vertices in each set have same type. 
\end{definition}
Observe that each $V_i$   either forms a  clique or an independent set in $G$.
Also, for any two distinct types
$V_i$ and $ V_j$, 
either each vertex in $V_i$ is adjacent to each vertex in $V_j$,
or no vertex in $V_i$ is adjacent to any vertex in $V_j$. 
We call a set $V_i$ as a \emph{clique type} (resp, independent type) if 
$G[V_i]$ is a clique (resp, independent set). If a type $V_i$ contains exactly one vertex then we consider it as an independent type. It is known that a smallest partition of $V(G)$ into clique types and independent types can be computed in polynomial time \cite{lampis2012algorithmic}. Hence, we assume that the partition $V_1, V_2, \dots, V_t$ of the graph $G$ is given as input.

\begin{observation}\label{obs:nd1}
    Let $f$ be an odd $k$-coloring of $G$. Let $u$ and $v$ be two vertices in an independent type $V_i$. Then, a color $c$ is an odd color of $u$ with respect to $f$ if and only if $c$ is an odd color of $v$ with respect to $f$. 
\end{observation}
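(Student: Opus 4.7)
The plan is to show that $u$ and $v$ have identical open neighborhoods, from which the claim follows immediately by comparing the relevant parities.

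First, I would unpack the definition of neighborhood diversity. By Definition~\ref{def:nd}, any two vertices $u,v$ in the same type $V_i$ satisfy $N(u)\setminus\{v\}=N(v)\setminus\{u\}$. Next, I would use the hypothesis that $V_i$ is an \emph{independent} type: by definition this means $G[V_i]$ is an independent set, so $uv\notin E(G)$, i.e.\ $u\notin N(v)$ and $v\notin N(u)$. Combining this with the type condition gives $N(u)=N(v)$.

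With $N(u)=N(v)$ in hand, the rest is a one-line parity argument. For every color $c\in[k]$,
\[
|f^{-1}(c)\cap N(u)| \;=\; |f^{-1}(c)\cap N(v)|,
\]
so the left-hand side is odd if and only if the right-hand side is odd. By the definition of odd color given in the preliminaries, $c$ is an odd color of $u$ with respect to $f$ if and only if $c$ is an odd color of $v$ with respect to $f$.

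There is really no obstacle here; the only subtle point worth flagging is the distinction between independent and clique types. In a clique type the above argument would fail, because then $u\in N(v)$ and $v\in N(u)$, so $N(u)$ and $N(v)$ differ precisely by swapping $u$ and $v$, which can flip the parity of $|f^{-1}(c)\cap N(\cdot)|$ whenever $f(u)=c$ or $f(v)=c$. Making the independence assumption explicit in the argument is the only thing that needs care.
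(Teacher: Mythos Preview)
Your argument is correct and is exactly the natural justification: in an independent type the same-type condition $N(u)\setminus\{v\}=N(v)\setminus\{u\}$ collapses to $N(u)=N(v)$, whence the parity of each color class in the open neighborhood coincides. The paper states this as an observation without proof, so your write-up supplies precisely the intended one-line argument.
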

Let $f$ be an odd $k$-coloring of $G$.
For a subset $S \subseteq V(G)$, a  color $c\in [k]$ is called an odd color for  $S$ with respect to $f$ if $c$ is an odd color for every vertex in $S$ with respect to $f$.
\begin{observation}\label{obs:nd-2}
    Let $f$ be an odd $k$-coloring of $G$ and
    let $V_i$ be a clique type. Then, 
    each vertex $u\in V_i$ has its odd color form the set $f(V_i)$.  
\end{observation}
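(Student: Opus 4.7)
The plan is to prove the statement by exhibiting, for every vertex $u \in V_i$, an explicit color in $f(V_i)$ that appears an odd number of times in $N(u)$. First I would note that, since $V_i$ is a clique type and singleton types are by convention treated as independent, we have $|V_i| \geq 2$. Because $f$ is proper and $V_i$ induces a clique, $f$ must assign pairwise distinct colors to the vertices of $V_i$, so $|f(V_i)| = |V_i| \geq 2$ and in particular the set $f(V_i) \setminus \{f(u)\}$ is nonempty.

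Next I would pick any color $c \in f(V_i) \setminus \{f(u)\}$, let $v$ denote the unique vertex of $V_i$ with $f(v) = c$, and show that $c$ appears exactly once in $N(u)$. Certainly $v \in N(u)$ because $V_i$ is a clique, and $v$ is the only vertex of $V_i$ colored $c$. To handle the remaining neighbors, I would invoke the defining property of neighborhood diversity: every vertex of $N(u) \setminus V_i$ is also adjacent to $v$, and then properness of $f$ forces it to avoid color $c = f(v)$. Consequently $|f^{-1}(c) \cap N(u)| = 1$, which is odd, so $c \in f(V_i)$ witnesses the odd condition at $u$.

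The main step requiring care — although not really a serious obstacle — is the final inference: one must combine the shared external neighborhood supplied by the neighborhood-diversity structure with the properness of $f$ to exclude a second occurrence of color $c$ among the neighbors of $u$ outside $V_i$. Once this point is settled, the observation follows immediately, and in fact the argument shows the stronger fact that every color in $f(V_i) \setminus \{f(u)\}$ is an odd color of $u$.
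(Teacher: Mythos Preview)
Your argument is correct. The paper states this observation without proof, so there is no authorial argument to compare against; your proof is the natural one and in fact establishes the stronger fact you note, namely that every color in $f(V_i)\setminus\{f(u)\}$ appears exactly once in $N(u)$.
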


\begin{theorem}\label{thm:nd}
\odc~is FPT when parameterized by the neighborhood diversity. 
\end{theorem}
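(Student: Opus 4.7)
The plan is to formulate \odc{} as an Integer Linear Program (ILP) whose number of variables depends only on the neighborhood diversity $t$, and then apply Lenstra's algorithm, which solves ILP feasibility in FPT time in the number of variables.

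First, I build the \emph{type graph} $T$ on the vertex set $[t]$, with $i \sim_T j$ iff every vertex of $V_i$ is adjacent to every vertex of $V_j$. In any proper coloring, the set of types in which a given color appears must be an independent set of $T$, and in a clique type a color appears at most once. This motivates associating to each color a \emph{signature} $\sigma = (S, \pi)$, where $S \subseteq [t]$ is an independent set of $T$ and $\pi : S \to \{0, 1\}$ records, for each $i \in S$, the parity of the number of times the color is used in $V_i$ (with $\pi(i) = 1$ forced whenever $V_i$ is a clique type). The number of signatures is at most $2^{O(t)}$.

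The ILP has an integer variable $y_\sigma$ for each signature, counting the number of colors of that signature, together with auxiliary variables $z_{\sigma,i}$ for each independent type $V_i$ with $i \in S$ that record the total number of vertices of $V_i$ assigned a color of signature $\sigma$. Routine linear constraints encode the total vertex count of each type, the color budget $\sum_\sigma y_\sigma \leq k$, and per-signature feasibility (bounds $y_\sigma \leq z_{\sigma,i} \leq |V_i|\, y_\sigma$, with $z_{\sigma,i} \geq 2 y_\sigma$ when $\pi(i) = 0$, and $z_{\sigma,i} \equiv y_\sigma \pi(i) \pmod 2$). Using Observation~\ref{obs:nd1}, the odd coloring condition for an independent type $V_i$ reduces to a single linear constraint of the form $\sum_{\sigma : \mathrm{ext}_i(\sigma) = 1} y_\sigma \geq 1$, where $\mathrm{ext}_i(\sigma) = \sum_{j \in S \setminus \{i\},\, V_j \sim_T V_i} \pi(j) \bmod 2$ is the parity of the external contribution to $N(v)$ for any $v \in V_i$ from any color of signature $\sigma$.

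The main obstacle is the odd coloring condition for clique types, where vertices in the same type can require different odd colors. For $v \in V_i$ with $f(v) = c$, a case analysis on whether a candidate odd color $c'$ equals $c$, lies in $f(V_i) \setminus \{c\}$, or lies outside $f(V_i)$ shows that every vertex of $V_i$ admits an odd color if and only if it is \emph{not} the case that $b_i = 1$ and $d_i = 0$ simultaneously, where $b_i$ counts the colors in $f(V_i)$ with even $\mathrm{ext}_i$-parity and $d_i$ counts the colors not in $f(V_i)$ with odd $\mathrm{ext}_i$-parity; both are linear expressions in the $y_\sigma$. I handle this disjunction by branching on one of three satisfying alternatives ($b_i = 0$, $b_i \geq 2$, or $d_i \geq 1$) for each of the at most $t$ clique types, producing at most $3^t$ ILPs. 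Each resulting ILP has $2^{O(t)}$ variables and is solved in time depending only on $t$ and polynomial in the input by Lenstra's algorithm, yielding the desired FPT algorithm.
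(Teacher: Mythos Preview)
Your proof is correct and takes a genuinely different route from the paper. The paper argues combinatorially: it first records (Observation~\ref{obs:nd-2}) that in any proper coloring every vertex of a clique type of size $\ge 2$ automatically has an odd color, then guesses for each independent type which color will serve as its odd color together with the types and parities in which that color occurs, partially realises this guess in two phases, and finally reduces what remains to \textsc{Precoloring Extension} on a graph of neighborhood diversity at most $t$ (Theorem~\ref{thm:graphcoloring-nd}). Your approach instead compresses colorings by signatures and solves a single ILP with $2^{O(t)}$ variables via Lenstra's algorithm; this is cleaner and more uniform, and it avoids the dependence on an external precoloring-extension subroutine.

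One remark: your $3^t$ branching for clique types is unnecessary. For a clique type $V_i$ and any signature $\sigma=(S,\pi)$ with $i\in S$, independence of $S$ in $T$ forces $\{j\in S\setminus\{i\}: j\sim_T i\}=\emptyset$, hence $\mathrm{ext}_i(\sigma)=0$. Consequently $b_i=\sum_{\sigma:\,i\in S_\sigma}y_\sigma=|V_i|\ge 2$ in every feasible solution (since singleton types are treated as independent), so the bad case $b_i=1\wedge d_i=0$ never occurs and no branching is needed. This is exactly the content of the paper's Observation~\ref{obs:nd-2}: any color used on another vertex of the same clique type appears exactly once in $N(v)$. Dropping the branching leaves a single ILP and the same $2^{2^{O(t)}}\cdot n^{O(1)}$ bound; either way, both arguments give a double-exponential FPT algorithm.
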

\begin{proof}
Let $f$ be any optimal odd coloring of $G$. 
From the observation~\ref{obs:nd1}, we know that for each independent type $V_i$, there exists a color $c$ such that  $c$ is an odd color for the set $V_i$ with respect to $f$.

Without loss of generality, we assume that $V_1, \ldots, V_{t_1}$ are independent types and $V_{t_1+1}, \ldots, V_{t_2}$ are clique types of $G$, where $t=t_1+t_2$. Since there are $t_1$ independent types, at most  $t_1$ colors are needed to ensure an odd coloring for every independent type.
Let $c_1,\ldots,c_{t_1}$ be the set of colors (not necessarily distinct) such that $c_i$ is an odd color for the independent type $V_i$ in an optimal odd coloring of $G$. Without loss of generality, assume that
$c_1,\ldots,c_{t_1}$ are distinct. 

For each $i\in [t_1]$, let $T_i\subseteq\{V_1,\ldots, V_{t_1}\}$ be the set of independent types  such that the color $c_i$ is an odd color for each independent type in $T_i$. Let $A_i\subseteq \{V_1, V_2, \dots, V_t\}$  be the set of types (both independent and clique)
such that each type in $A_i$ has a vertex colored $c_i$ in an optimal odd coloring of $G$.  
Further, let $g_i:A_i\rightarrow \{o,e\}$ be a function that assigns a value from $\{o, e\}$ to each type in $A_i$ indicating whether the color $c_i$ is assigned to an odd number of vertices or an even number of vertices in the type. 
That is, 
for each $V_{\ell}\in A_i$,  assign 
$g_i(V_{\ell})=e$, if the number of vertices from $V_{\ell}$  assigned the color $c_i$ is even and otherwise
assign $g_i(V_{\ell})=o$.

Guessing the colors $c_1, \ldots, c_{t_1}$, $T_i$'s, $A_i$'s and $g_i$'s for each $i\in [t_1]$ can be done in FPT time.
The vertices of 
$G$ are colored in three phases as described below.

{\bf Phase~I.} For each $i\in [t_1]$, given a tuple $(c_i, T_i, A_i, g_i)$, 
    we  partially color the vertices of $G$ so that each independent type  in $T_i$ has $c_i$ as an odd color. 
We apply the following coloring scheme to each type $V_j\in A_i$. 
If $g_i(V_j)=e$, 
 arbitrarily choose two uncolored vertices from $V_j$ and assign them the color $c_i$. If $g_i(V_j)=o$,  arbitrarily choose one uncolored vertex from $V_j$ and assign the color $c_i$. If it is not possible to select vertices according to the above process, then declare the tuple $(c_i, T_i, A_i, g_i)$ invalid. 

{\bf Phase~II.} In this phase, we determine whether the colors $c_1, \ldots, c_{t_1}$ can be assigned to other independent type vertices of $G$ by respecting $g_i$'s.
Let $V_j$ be an independent type and  $d_j$ be the number of vertices in $V_j$ that were colored in Phase~I. Depending on the sizes of $V_j$ and $d_j$, we have the following cases.

\begin{itemize}
    
    \item $|V_j|$ is even
    \begin{itemize}
        \item $d_j$ is even and a color $c\in [t_1]$ 
         is assigned to an even number of vertices of $V_j$ during Phase~I: 
       
        Assign the color $c$ to all remaining uncolored vertices in $V_j$. Since $|V_j|-d_j$ is even, the total number of vertices in $V_j$ assigned $c$ remains even.

    \item $d_j$ is even and  no color that is assigned to an even number of vertices in $V_j$ during Phase~I:  
In this case, an even number of colors were used in $V_j$ during Phase~I. 
Arbitrarily pick a color $c$  used in $V_j$ during Phase~I and
assign it to all remaining uncolored vertices in $V_j$. Since $|V_j|-d_j$ is even, the total number of vertices in $V_j$ assigned $c$ reamins odd. 

        \item $d_j$ is odd :
      
        At least one color $c$ was used an odd number of times in $V_j$ during Phase~I.
        Assign $c$ to all uncolored vertices of $V_j$, except for one vertex (to respect $g_j$).
        The one uncolored vertex will be colored in  Phase~III with a new color.
           
    \end{itemize}

    \item $|V_i|$ is odd
    
This case is similar to the previous one; therefore, the details are omitted.

 \end{itemize}

\medskip
{\bf Phase~III.} 
After Phase~I and Phase~II, some vertices of $G$ are colored with the colors from the set $[t_1]$ and some vertices of $G$ remain uncolored.
We construct an auxiliary graph $G'$ from $G$ where $V(G')$  consists of all uncolored vertices of $G$ and $E(G')=\{uv\in E(G)\mid u,v\in V(G')\}$. 
Since $G'$ is a subgraph of $G$, the neighborhood diversity  
of $G'$ is at most $t$.  

\begin{claim}\label{cla:gg'}
    There is an odd coloring of $G$ using $t_1+s$ colors respecting the guess $(c_i, T_i, A_i, g_i)$ for each $i\in [t_1]$, 
    if and only if there is a proper coloring of $G'$ using $s$ colors.  
\end{claim}     

The proof of the above theorem is trivial and hence we omit the proof. 
We use the following theorem from \cite{robert-g-nd} to obtain the chromatic number of $G'$ and thus obtain an odd coloring of $G$. 
\begin{theorem}[\cite{robert-g-nd}]\label{thm:graphcoloring-nd}
   The \textsc{Precoloring Extension} problem can be solved in time
$\OO(q^{2.5q+o(q)} \cdot n)$, where $q = 2^{2t}$, on graphs of neighborhood diversity at most $t$. 
\end{theorem}

This completes the description of the algorithm. We now analyze its running time. 
Guessing the partition of independent types that receive the same odd color, 
$A_i$ and $g_i$ for each $i\in [t_1]$ 
can be done in time $t^t2^{\OO(t^2)}$. The precoloring extension of coloring (if exists) for an independent type can be done 
in polynomial time. The graph $G'$ can be constructed in polynomial time. Obtaining the chromatic number for $G'$ can be done in time $2^{\OO(t2^{2t})}\cdot n^{\OO(1)}$. 
Thus, the overall running time of the algorithm is $2^{\OO(t2^{2t})}n^{\OO(1)}$. 

\end{proof}

\subsection{Parameterization by Cluster Vertex Deletion Set}\label{sec:cvd}
In this section, we show that \odc~is FPT parameterized by the distance to cluster. Given a graph $G$, a set $X\subseteq V(G)$ of size at most $t$ such that $G-X$ is a disjoint union of cliques can be computed in time $1.9102^t n^{\mathcal{O}(1)}$ \cite{d2cfpt}. 
Therefore, we assume that $X$ is given as part of the input. 

\begin{theorem}\label{thm:cvd}
\odc~is FPT when parameterized by the distance to cluster. 
\end{theorem}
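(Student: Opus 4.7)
The plan is to guess the coloring behavior on the modulator $X$ and then use an integer linear program (ILP) of bounded dimension, invoking Lenstra's theorem, to handle the cliques of $G-X$ uniformly. Write $t=|X|$ and let $C_1,\ldots,C_m$ be the cliques of $G-X$.

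The key structural observation is the following. In any odd coloring $f$, call a color \emph{heavy} if it appears on some vertex of $X$ (there are at most $t$ heavy colors) and \emph{light} otherwise. For any vertex $u \in C_i$ and any light color $c$ used on some $u' \in C_i \setminus \{u\}$, the color $c$ appears exactly once in $N(u)$: once inside the clique (the coloring is proper, so $c$ occurs on at most one vertex of $C_i$) and zero times on $X$ (since $c$ is light). Hence $c$ is automatically an odd color of $u$, and every vertex lying in a clique of size at least $t+2$ is ``happy'' regardless of the rest of the coloring. Only the vertices of $X$ and the vertices of the \emph{small} cliques (size at most $t+1$) require the odd-color condition to be enforced explicitly.

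The algorithm then proceeds in four stages. \textbf{Stage 1:} guess, in time $f(t)$, (i) the partition of $X$ into color classes (at most the Bell number $B_t$ choices) and (ii) for each $v\in X$ the \emph{source} of $v$'s odd color --- either a specified heavy color or ``a light color supplied by a specified small clique adjacent to $v$''. \textbf{Stage 2:} classify the cliques by their \emph{type}, where the type of $C_i$ records, for every $S \subseteq X$, the number of vertices $v\in C_i$ with $N(v)\cap X = S$; small cliques admit only $f(t)$ distinct types, and for large cliques we only need to know, for each $S$ and each heavy color $c$, how many $S$-vertices are colored $c$. \textbf{Stage 3:} for each small-clique type and each feasible \emph{coloring pattern} on that type (choice of which subtype-vertices receive which heavy colors, remaining vertices taking distinct light colors, and internally respecting the odd-color property), introduce an ILP variable counting how many cliques of this type use this pattern; introduce analogous aggregated variables for large cliques. \textbf{Stage 4:} add constraints enforcing (a) for each type the chosen patterns partition its cliques; (b) for each $v\in X$ and its guessed odd-color source $c$, the count of $c$ in $N(v)$ has odd parity (encoded linearly via one integer slack variable per constraint, since parity is a linear condition modulo $2$); (c) proper coloring at the $X$-boundary. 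Minimize the total number of heavy plus light colors and test whether it is at most $k$. As the ILP has $f(t)$ variables, Lenstra's theorem solves it in $g(t)\cdot n^{\OO(1)}$ time.

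The main obstacle is that a clique's coloring pattern also determines how many distinct light colors the clique contributes, and light colors may in principle be shared across disjoint cliques (they still give odd colors internally, but their total count matters for the global budget and for any $X$-vertex whose odd-color source is light). I would handle this by binary-searching on the total number $\ell$ of light colors and, inside the ILP, tracking for each $(S, \text{light-color-slot})$ pair how many $S$-vertices across all cliques receive that slot, which gives enough bookkeeping to verify the parity condition at the relevant $X$-vertices while keeping the variable count bounded by a function of $t$ alone.
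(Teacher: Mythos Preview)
Your high-level strategy---guess the coloring on $X$ and reduce the rest to a bounded-dimension ILP solved via Lenstra---is a legitimate alternative to the paper's approach, which instead processes the cliques of $G-X$ one by one with a dynamic program whose state records, for every neighborhood-type $Y\subseteq X$ and every tracked color, the running parity of that color among the $Y$-vertices seen so far. Your structural observation that every vertex in a sufficiently large clique automatically receives an odd color (namely any light color sitting on a clique-neighbor) is correct and is a pleasant simplification the paper does not isolate.

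There is, however, a real gap in how you handle the case that some $v\in X$ has a \emph{light} odd color. In Stage~1 you propose to guess ``a light color supplied by a specified small clique adjacent to~$v$'', but there may be $\Omega(n)$ small cliques, so this guess already exceeds $f(t)$ options; moreover nothing forces the odd color of $v$ to come from a small clique rather than a large one. Your fallback in the final paragraph---binary-search on the number $\ell$ of light colors and track an ILP variable for every pair $(S,\text{light-color slot})$---introduces $2^{t}\cdot\ell$ variables, and $\ell$ can be as large as $k-t$, which is not bounded by any function of $t$. Lenstra's theorem then no longer applies and the whole argument collapses.

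The clean repair is to do what the paper does at the guessing stage: for each $v\in X$ guess its odd color $g(v)$ \emph{as a color}, not as a source. The values $g(v)$ introduce at most $t$ new colors beyond the at most $t$ colors already used on $X$, so there are at most $2t$ tracked colors in total. Every parity constraint for an $X$-vertex then involves only tracked colors, and all remaining light colors become fully interchangeable---they contribute nothing beyond the per-clique budget constraint $|C_i|\le k$. With this change your ILP genuinely has $f(t)$ variables and the rest of your outline goes through (after adjusting the large-clique threshold from $t+2$ to $2t+2$).
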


\begin{proof}
Let $X \subseteq V(G)$ of size at most $t$   such that $G -X$ is a disjoint union of cliques. We run over all possible colorings  of $X$. For each coloring of $X$, we first check whether it is a proper coloring of $G[X]$.  Discard all colorings which do not form a proper coloring of $G[X]$.  If a coloring is a proper coloring of $G[X]$ then we proceed to check if it can be extended to the entire graph.


Given a proper coloring $c: X \rightarrow [t]$ of $X$, we now check whether there exists an odd coloring $f: V(G) \rightarrow[k]$ such that $f|_X=c$.


  Let $c:X \rightarrow \{1,2, \ldots, t_1 \}$ be a proper  coloring of $G[X]$, where $t_1 \leq t$. For each vertex $v\in X$, we guess its odd color in the coloring $f$. Let $g: X \rightarrow [t']$, be a function where $g(v)$ denotes the odd color of vertex $v\in X$ with respect to $f$. Notice that $t'\leq t_1+ t \leq 2t$, as we may need at most $t$ new colors to act as the odd colors for the vertices of $X$. 


  Let $C_1, \ldots, C_{p}$ be the cliques of the graph $G-X$. For each subset $Y \subseteq X$ and  each clique $C$ in $G-X$ we define: 
  $$T^Y_C=\{v \in C~|~ N(v) \cap X=Y\}. $$

   For a subset $Y \subseteq X$, we define: $T^Y=\bigcup\limits_{C}T^Y_C$. 
   For each $Y \subseteq X$, let $h^Y: [t'] \rightarrow \{e,o,0\}$, where $h^Y(i)$ denotes the 
 parity (even, odd or zero)\footnote{$h^Y(i)=0$ means no vertex of $T^Y$ is colored with $i$ (we call it as zero parity)} of color $i$ in $T^Y$ with respect to the odd coloring $f$. That is $h^Y(i)=e$ (resp. $h^Y(i)=o$ and $h^Y(i)=0$) 
 if $|f^{-1}(i)\cap T^Y|$ is even (resp. odd and zero). 
 Let $h=\{h^Y~|~Y \subseteq X\}$.
    Given the coloring $c$  and the function $h$, 
    it is easy to verify whether each vertex $v\in X$ has an odd color by considering the neighbors of $v$ in $X$ (i.e., $N(v)\cap X\cap c^{-1}(g(v)$) and its neighbors in $G-X$  (i.e., $\sum\limits_{Y:v\in Y} h^Y(g(v))$), the sum of which should be odd. Thus it is sufficient to obtain a proper coloring of $G-X$ that respects $h^Y$ for each $Y\subseteq X$ ensuring that each vertex in $G-X$ has an odd color. 
    
 

 Let $C_1,\ldots, C_{p}$ be an arbitrary and fixed ordering of cliques in $G-X$. 
 For a clique $C_q \in G-X$, and for each subset $Y \subseteq X$, we define the following. 
  
   \begin{itemize}
       \item $h^{Y}_q: [t'] \rightarrow \{0,1\}$, where 
       $h^Y_q(i)=1$ if  there exists a vertex in $T^Y_{C_q}$ that is assigned color $i$ in $f$,  and $h^Y_q(i)=0$ otherwise . 
   \item   $\widehat{h}^Y_q:[t'] \rightarrow \{e,o,0\}$, where $\widehat{h}^Y_q(i)$ denotes the parity of color $i$ in $\bigcup \limits _{j \in [q]} T^Y_{C_j}$ in 
   the coloring $f$. 
   \item  
       $h_q=\{h^{Y}_q~|~Y \subseteq X\}$ and $\widehat{h}_q=\{\widehat{h}^Y_q~|~Y \subseteq X\}$.
       \end{itemize}
    
      For each $q\in [p]$ and for each $Y \subseteq X$, we run over all possible $h_q$ and $\widehat{h}_q$.
    Given a coloring $c$ of $X$,  $h_q$ and $\widehat{h}_q$ for each $q \in [p]$,  we now check whether there exists a odd coloring $f: V(G) \rightarrow[k]$ such that $f|_X=c$ and $f$ respect $h_q$ and $\widehat{h}_q$. For this purpose we design a dynamic programming algorithm.  



\medskip 
\noindent
\emph{Definition of the table entry.} Given a $h_q$ and $\widehat{h}_q$, let $M[q, h_q, \widehat{h}_q]$ denote the minimum number of colors  
 needed in a coloring $\ell_q$ of the graph $G_q=G[C_1\cup, \dots, \cup C_q\cup X]$ such that 
 \begin{itemize}

  \item for each subset $Y \subseteq X$, and for each color $i \in [t']$, $h^Y_q(i)=1$ if and only if $\ell_q^{-1}(i)\cap T^Y_{C_q}\neq \emptyset$.
  \item for each subset $Y \subseteq X$, and for each color $i \in [t']$, 
  $\widehat{h}^Y_q(i)$ is equal to the parity of the set \big($\ell_q^{-1}(i)\cap \bigcup \limits _{j \in [q]} T^Y_{C_j}$\big),  
     \item $\ell_q|_X=c$

  \item each vertex in $C_1\cup, \dots, \cup C_q$ has an odd color with respect to $\ell_q$. 
 \end{itemize}

 We now show how to compute the entry $M[q, h_q, \widehat{h}_q]$. 

 \medskip
 \noindent
\textbf{Base case.}
We set $M[1, h_1, \widehat{h}_1]=a_1$, where 
$a_1=|\ell_1(C_1) \setminus [t']|$ is the number of colors (outside $[t']$) used by $\ell_1$ to color the vertices of the clique $C_1$. Here $\ell_1$ satisfies the above mentioned constrains.  



\medskip
\noindent
\textbf{Recursive step.} For each $q\geq 2$, 
we 
set $$M[q, h_q, \widehat{h}_q]=\max \{a_q, \min\limits_{\substack{h_{q-1},\widehat{h}_{q-1}}} \{M[q-1, h_{q-1}, \widehat{h}_{q-1}]\}\}$$
  where $a_q=|\ell_q(C_q) \setminus [t']| $, 
    and $\widehat{h}_{q-1}$ satisfies the following:         
           
           \begin{equation*}
         \widehat{h}_{q-1}(i)=\begin{cases}
           0 & \text{ if } \widehat{h}^Y_{q}(i)=0\\
           e & \text{ if } \widehat{h}^Y_{q}(i)=e 
           \text{ and } h^Y_q(i)=0, \\
           o  & \text { if } \widehat{h}^Y_{q}(i)=e 
           \text{ and } h^Y_q(i)=1,  \\
            o &  \text { if }  \widehat{h}^Y_{q}(i)=o 
           \text{ and } h^Y_q(i)=0,  \\
            e \text{~or~} 0 &  \text { if } \widehat{h}^Y_{q}(i)=o   
            \text{ and } h^Y_q(i)=1. 
            \\
           \end{cases}
       \end{equation*}
The final coloring is given by the coloring $\ell_p$, which is obtained from the entry $M[p,*,*]$.      

    

\medskip
\noindent
\textbf{Output. } Notice that each tuple $M[p,  h', \widehat{h'}]$ denotes the minimum number of new colors outside $[t']$ that are assigned to the vertices of $G-X$ respecting $h'$ and $\widehat{h'}$, and each vertex in $G-X$ has an odd color. We now check for those entries  $M[p,  *, \widehat{h'}]$ where $\widehat{h'}(i)=h^Y(i)$ for each $i$ and output the corresponding entry value that is the  minimum over all of them. 



We now describe the running time of the algorithm. We go over all functions of $f$, $g$ and $h$ and check if there exists an odd coloring extending it. 
The dynamic programming table stores $p\cdot 3^{2t\cdot 2^t}$ entries. Since each entry can be computed in polynomial time and the overall tome to compute the table is $3^{O(t2^t)}n^{\OO(1)}$. 
\qed
\end{proof}

\subsection{Parameterization by Distance to Co-Cluster}\label{sec:co-cl}

In this section, we show that \odc~ is {\sf FPT} when parameterized by the distance to co-cluster. 

\begin{theorem}\label{thm:dcc}
\odc~is {\sf FPT} when parameterized by the distance to co-cluster. 
\end{theorem}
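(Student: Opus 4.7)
The plan is to mirror the structure of the dynamic programming algorithm developed for the distance to cluster parameter in Theorem~\ref{thm:cvd}, but tailored to the complete multipartite structure of $G-X$. Let $X\subseteq V(G)$ with $|X|\le t$ be such that $G-X$ is a complete multipartite graph with parts $I_1,\ldots,I_p$. As a first step I would guess a proper coloring $c\colon X\to[t_1]$ with $t_1\le t$, an odd-color function $g\colon X\to[t']$ with $t'\le 2t$, and for each $Y\subseteq X$ a target parity vector $h^Y\colon[t']\to\{e,o,0\}$ describing how each color should appear within $T^Y=\bigcup_{i} T^Y_{I_i}$, where $T^Y_{I_i}=\{v\in I_i\mid N(v)\cap X=Y\}$. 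Given these guesses, the odd-color condition for every vertex of $X$ can be verified directly from $c$, $g$, and $h$, exactly as in the cluster case.

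The key structural properties to exploit are (i) in any proper coloring of the complete multipartite graph $G-X$, the color sets used by two different parts $I_i$ and $I_j$ must be disjoint, since $I_i$ and $I_j$ are joined by a complete bipartite graph; and (ii) for a vertex $v\in I_i$, the neighborhood $N(v)\cap(G-X)$ is exactly $\bigcup_{j\ne i}I_j$, and because distinct parts use disjoint colors, the parity of a color $c$ in this neighborhood equals its parity in the unique part that uses it. In particular, if at least two parts each contain some color with an odd number of occurrences, then every vertex of $G-X$ has an odd color available through this cross-part mechanism, provided no $X$-neighbor of $v$ carries that same color (a condition that depends only on $v$'s subtype $Y$).

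With these observations I would set up a dynamic program processing the parts $I_1,\ldots,I_p$ in an arbitrary fixed order. The DP state records, for each color $c\in[t']$, whether $c$ has already been consumed by an earlier part (property (i) forbids reuse), together with, for every $Y\subseteq X$ and every $c\in[t']$, the cumulative parity of $c$ within the processed portion of $T^Y$; an additional small flag tracks how many processed parts already exhibit some color with odd count (capped at $2$). For each new part $I_q$ we branch over (a) the subset of still-available colors from $[t']$ to use inside $I_q$, subject to the constraint that a color used on $T^Y_{I_q}$ must not lie in $c(Y)$, (b) for each chosen color and each subtype $T^Y_{I_q}$ the parity of its assignment, and (c) the number of fresh (non-$[t']$) colors used to cover the remaining vertices of $I_q$ together with their parities across subtypes. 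The transition updates the cumulative parities and the flag, and charges the fresh colors to the total. At the end we retain only those states whose cumulative parities match every $h^Y$ and whose flag, together with $X$-based odd-color contributions recorded by the $h^Y$'s, guarantees an odd color to every vertex of $G-X$.

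The main obstacle will be making the local transition faithful: for each part $I_q$ and each subtype $T^Y_{I_q}$ we must ensure that the chosen colors with their prescribed parities are realisable by an actual partition of the $|T^Y_{I_q}|$ vertices into color classes, and that each vertex of $I_q$ receives an odd color, possibly relying on a part processed only later. The latter forward dependence is best resolved by a preliminary guess specifying, for every part, whether its vertices draw their odd color from $X$ (based on the subtype $Y$) or from the cross-part mechanism; once this guess is fixed the DP becomes purely forward-directional. Since there are at most $t^t$ colorings of $X$, at most $2^{t'}$ consumption patterns of $[t']$, and at most $3^{t'\cdot 2^t}$ parity profiles, the overall running time is bounded by $3^{\mathcal{O}(t\cdot 2^t)}\cdot n^{\mathcal{O}(1)}$, matching the bound obtained for the cluster case in Theorem~\ref{thm:cvd}.
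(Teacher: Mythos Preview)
Your structural observations are correct and your DP skeleton is reasonable, but one step, as written, breaks the FPT bound. You propose to resolve the forward dependence by ``a preliminary guess specifying, for every part, whether its vertices draw their odd color from $X$ \dots\ or from the cross-part mechanism''. There are $p$ parts and $p$ is not bounded in terms of $t$, so a per-part guess costs at least $2^{p}$ and the algorithm is no longer FPT. The running-time analysis you give at the end silently omits this guess.

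What fixes this---and what the paper's sketch relies on---is a sharper consequence of your own observation (i): since the color sets of different parts are pairwise disjoint, at most $t'\le 2t$ parts can contain any color from $[t']$ at all. The remaining $\ell=p-y$ parts use only fresh colors and are completely interchangeable; the paper notes that they can be colored with one fresh color each, plus at most two additional fresh colors placed in two distinct parts to make the cross-part mechanism work for every vertex of $G-X$. With this observation the only per-part information one ever needs concerns the at most $2t$ ``special'' parts, so all guessing is bounded by a function of $t$, and the DP merely has to decide which of the actual parts $I_1,\dots,I_p$ play those $O(t)$ abstract roles. Your flag-capped-at-$2$ idea is exactly the right intuition for the ``two extra fresh colors'', but it does not by itself eliminate the per-part guess; the interchangeability of the non-special parts is the missing ingredient.
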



The algorithm is similar in flavor to the one in Theorem \ref{thm:cvd}. Below, we highlight the key steps.  Let $X \subseteq V(G)$ be a subset of size $t$ such that $G -X$ is the complement of a cluster graph. 
Let $I_1, I_2, \dots, I_p$ be the independent sets of the graph
 $G-X$  such that 
 each vertex in $I_i$ is adjacent to every vertex in $I_j$, for all $i\neq j$. For two subsets  $A, B\subseteq V(G)$, we say that $A$ and $B$ form a \emph{biclique} if both are  independent sets and every vertex in $A$ is adjacent to every vertex of $B$. 
 \begin{itemize}

 \item We run over all possible colorings  of $X$. For each coloring, we first check whether it is a proper coloring of $G[X]$. If not, we discard it. Otherwise, we test whether it can be extended to the entire graph.
 Given a proper coloring $c: X \rightarrow [t_1]$ of $G[X]$, where $t_1\leq t$, 
 it remains to check whether there exists an odd coloring $f: V(G) \rightarrow[k]$ such that $f|_X=c$. 


 \item Let $c:X \rightarrow \{1,2, \ldots, t_1 \}$ and 
     $g: X \rightarrow [t']$ be the colorings, where for each vertex $v\in X$,  
     $c(v)$ denotes the assigned color and $g(v)$ denotes its odd color with respect to $f$. 
     Since $t'\leq t_1+ t \leq 2t$, the number of colors used is at most $2t$ by $g$. 
     
     \item Let $Y=c(X)\cup g(X)$ denote the set of assigned colors and odd colors for the vertices in $X$. 
     We first try to assign colors from $Y$ to the vertices of $G-X$. 
     Since 
     each pair of independent sets in $G-X$ form a biclique, 
     each color in $Y$  can be assigned to vertices of at most one independent set. Thus at most $|Y|$ 
    independent sets contain vertices that receive colors from $Y$. 
     

     \item Let $y\leq |Y|$ be the number of independent sets from $G-X$ that contain vertices assigned colors from $Y$. Let $\mathcal{Z}=\{Z_1, Z_2, \dots, Z_{y}\}$ be the collection of such independent sets. 
     Let $(Y_1, \ldots, Y_p)$ be the partition of colors of $Y$, where $Y_i$ denotes the set of colors used in the independent set $Z_i$. 
     We further refine this partition by 
 distributing the colors in $Y_i$ according to the type partition of $Z_i$.  
 That is, for each color  $d\in Y_i$ assigned within  $Z_i$,  
  we guess the set of types (based on their neighborhood in $X$)
  from $Z_i$, where vertices are assigned the color $d$. We also guess the parity (even or odd) of the number of vertices of each type that are assigned the color $d$. 
  
  \item It is possible to check in polynomial time whether each vertex in $X$ has an odd color satisfying the functions $c$, $g$ and the  partition $\mathcal{Z}$.

    \item Notice that there are $\ell=p-y$ independent sets in which no vertices are assigned colors from $Y$. 
It is easy to verify that at most $\ell+2$ new colors (i.e., colors not in $Y$) are sufficient to color these independent sets while ensuring that every vertex in $G-X$ has an odd color.
The additional two colors may be necessary to satisfy the odd coloring property for vertices in $G-X$. 
We guess whether zero, one or two additional colors are needed along with their parity.  We identify the independent sets that will receive these extra colors and assign them accordingly.




     \item The final step is to identify the independent sets in $\mathcal{Z}$ from $I_1, I_2, \dots, I_p$ using a dynamic programming routine. This step is similar to the algorithm in  Theorem \ref{thm:cvd}. 
 \end{itemize}

\section{Odd Coloring on Restrcited Graph Classes}\label{sec:restricted}
\subsection{Cographs}\label{sec:cog}
In this section, we give a linear-time algorithm for computing odd chromatic number on cographs. A \emph{cograph} is a graph with no induced $P_4$.

\begin{lemma}[\cite{corneil1985linear}]~\label{cograph-structure}
    If $G$ is a cograph then one of the following holds.
    \begin{enumerate}
        \item $G$ has at most one vertex.
        \item $G$ is the disjoint union of two cographs $G_1$ and $G_2$, i.e.,
        $G=G_1 \cup G_2$.
        \item $G$ is the join of two cographs $G_1$ and $G_2$, i.e.,
        $G=G_1 \vee G_2$.
    \end{enumerate}
\end{lemma}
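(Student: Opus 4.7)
The plan is to prove this classical structure theorem by induction on $|V(G)|$, appealing to the equivalent characterization due to Seinsche: every $P_4$-free graph on at least two vertices is either disconnected or has a disconnected complement. A key auxiliary fact is that the class of $P_4$-free graphs is closed under taking induced subgraphs, since any induced $P_4$ of a subgraph is already an induced $P_4$ of the host graph. This closure property is exactly what lets the induction close: once $G$ is split into two smaller induced subgraphs, both are automatically cographs.

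The base case $|V(G)| \le 1$ yields case (1) directly. For the inductive step, assume $|V(G)| \ge 2$ and apply the Seinsche dichotomy. If $G$ is disconnected, pick any connected component and let $G_1$ be the subgraph it induces, and let $G_2 := G - V(G_1)$; by heredity both $G_1$ and $G_2$ are cographs, so $G = G_1 \cup G_2$ witnesses case (2). If instead $G$ is connected and $\bar{G}$ is disconnected, partition $V(G)$ into non-empty sets $V_1, V_2$ corresponding to a separation of $\bar{G}$; then in $G$ every vertex of $V_1$ is adjacent to every vertex of $V_2$, so $G = G[V_1] \vee G[V_2]$ and both factors are cographs by heredity, witnessing case (3).

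The main obstacle, and the genuine content of the lemma, is establishing the Seinsche dichotomy: a connected $P_4$-free graph on at least two vertices must have a disconnected complement. I would argue by contradiction, assuming both $G$ and $\bar{G}$ are connected and exhibiting an induced $P_4$. Run BFS in $G$ from an arbitrary vertex $v$, producing distance layers $L_0 = \{v\}, L_1, L_2, \dots$. If some vertex lies at distance at least $3$ from $v$, then the first four vertices along a shortest path from $v$ to it already form an induced $P_4$. Otherwise $G$ has diameter $2$, and connectedness of $\bar{G}$ forces the existence of a vertex $u \in L_2$ together with $w, w' \in L_1$ satisfying $u \sim w$, $u \not\sim w'$, and $w \not\sim w'$; then $u, w, v, w'$ induce a $P_4$. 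The delicate part is the existence of such $w'$: if no such $w'$ existed, the neighbors and non-neighbors of $u$ inside $L_1$ would be completely joined, and propagating this through BFS layers one can show $\bar{G}$ would be disconnected, contrary to assumption. This contradiction completes the dichotomy and hence the lemma.
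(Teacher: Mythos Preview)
The paper does not give its own proof of this lemma; it is quoted as a classical structural result from Corneil, Perl, and Stewart~\cite{corneil1985linear} and used as a black box for the cograph algorithm. So there is no in-paper argument to compare your proposal against.

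Your overall plan---reduce the recursive decomposition to Seinsche's dichotomy (``every $P_4$-free graph on at least two vertices has $G$ or $\overline{G}$ disconnected'') and close the induction via heredity of $P_4$-freeness---is the standard route and is sound. The one soft spot is your diameter-$2$ case in the proof of the dichotomy. You assert that if no triple $u\in L_2$, $w\in N(u)\cap L_1$, $w'\in L_1\setminus N(u)$ with $w\not\sim w'$ exists, then ``propagating this through BFS layers'' forces $\overline{G}$ to be disconnected. As stated, that implication is not true from the layer condition alone: with $L_1=\{a,b\}$, $L_2=\{u,u'\}$, $u\sim a$, $u'\sim b$, $a\sim b$, $u\not\sim u'$, the layer condition holds and $\overline{G}$ is connected; what actually fails is $P_4$-freeness (the path $u\,a\,b\,u'$ is induced). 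So the contradiction in this sub-case must invoke $P_4$-freeness explicitly, not just the BFS structure; you should also handle the boundary where $L_1\setminus N(u)=\emptyset$ for every $u\in L_2$. Once you spell out how $P_4$-freeness is used to constrain adjacencies inside $L_2$ and between $L_2$ and $L_1$, the argument closes; alternatively, you can sidestep the case analysis entirely by invoking Seinsche's theorem as a citation, which matches what the paper does.
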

A cotree $T_G$ of a cograph $G$ is a rooted tree in which each internal vertex is either of
$\cup$ (union) type  or $\vee$ (join) type. The leaves of $T_G$ correspond to the vertices of the cograph $G$.

\begin{lemma}
    For any connected cograph $G$, we have $\chi(G) \leq \chi_o(G) \leq \chi(G)+2$.
\end{lemma}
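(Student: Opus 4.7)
The lower bound $\chi(G) \leq \chi_o(G)$ is immediate, since every odd coloring is by definition a proper coloring.

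For the upper bound, I will use the structural decomposition in Lemma~\ref{cograph-structure}: a connected cograph with at least two vertices is a join $G = G_1 \vee G_2$ of two cographs (if it had more than one vertex but were a disjoint union, it would be disconnected). The key observation that drives the whole argument is that in a join, the open neighborhood of every vertex of $G_1$ contains all of $V(G_2)$, and symmetrically. Consequently, any color that appears an odd number of times on $V(G_2)$ automatically acts as an odd color for \emph{every} vertex of $G_1$, and vice versa. Hence to construct an odd coloring of $G$ it is sufficient to properly color $G_1$ and $G_2$ with disjoint palettes in such a way that each side contains at least one color class of odd size.

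The plan is as follows. Start with optimal proper colorings $f_1, f_2$ of $G_1, G_2$ on disjoint palettes of sizes $\chi(G_1)$ and $\chi(G_2)$; their union is a proper coloring of $G$ using $\chi(G_1) + \chi(G_2) = \chi(G)$ colors. If some color class of $f_2$ inside $V(G_2)$ already has odd size, leave $G_2$ alone; otherwise every class has even size (so in particular at least two vertices), and I recolor a single vertex of one such class with a brand-new color. The modified class then has odd size, and the newly created singleton class also does, so in either situation $V(G_2)$ provides an odd color for every vertex of $V(G_1)$. Apply the symmetric fix to $V(G_1)$. The result is still a proper coloring (splitting a class only subdivides an independent set), every vertex has an odd color, and the palette has grown by at most one color per side, yielding $\chi_o(G) \leq \chi(G) + 2$.

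The only subtleties are routine corner cases: if one side consists of a single vertex $v$, then $v$ itself already forms a color class of odd (singleton) size, so no extra color is required on that side; the case $|V(G)| = 1$ is vacuous or excluded by the definition of odd coloring, depending on convention. The main thing to check carefully is that after the at most two recolorings, every vertex of $G$ indeed sees the promised odd color in its neighborhood, which follows directly from the join observation above, so I do not anticipate a real obstacle beyond this bookkeeping.
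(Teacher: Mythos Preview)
Your proof is correct and follows essentially the same approach as the paper: both use the join decomposition $G = G_1 \vee G_2$, start from an optimal proper coloring, and introduce at most one fresh color on each side so that a singleton (hence odd-size) color class appears in $V(G_1)$ and in $V(G_2)$, serving as an odd color for all vertices of the other side. The paper's version is marginally more streamlined in that it simply always recolors one vertex per side with a new color, without first checking whether an odd-size class already exists; your parity check is a harmless refinement that does not change the bound.
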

\begin{proof}
    As every odd coloring of $G$ is a proper coloring, we get $\chi(G) \leq \chi_o(G)$.
    Since $G$ is a connected cograph, let $G=G_1 \vee G_2$. Let $f$ be a proper coloring of $G$ with colors $\{1,2,\ldots, \chi(G)\}$. 
    We obtain an odd coloring $g$ of $G$ from $f$ by changing the color of two arbitrary vertices $u$ from $G_1$  and $v$ from $G_2$ with colors $\chi(G)+1$ and $\chi(G)+2$ respectively.
    The new colors $\chi(G)+1$ and $\chi(G)+2$ are odd colors of vertices of $G_2$ and $G_1$ respectively. 
    \qed
\end{proof}
We now introduce the notions of 
strong proper-coloring and strong odd-coloring. 

\begin{definition}[Strong proper-coloring]
A \emph{strong proper-coloring} of a graph $G$ is a proper-coloring $f:V(G) \rightarrow [k]$  such that $|f^{-1}(i)|$ is odd for some $i \in [k]$. The smallest integer  $k$ 
such that $G$ has a strong proper-coloring  is called \emph{strong chromatic number} of $G$, denoted by $\widetilde{\chi}(G)$. 
\end{definition}

\begin{definition}[Strong odd-coloring]
A \emph{strong odd-coloring} of a graph $G$ is an odd-coloring $f:V(G) \rightarrow [k]$  such that $|f^{-1}(i)|$ is odd for some $i \in [k]$. The smallest integer  $k$ 
such that $G$ has a strong odd-coloring  is called \emph{strong odd chromatic number} of $G$, 
denoted by $\widetilde{\chi_o}(G)$. 
\end{definition}

\begin{lemma}\label{cograph-relation}
For any graph $G$, we have 
    \begin{enumerate}
        \item  $\widetilde{\chi}(G) \leq {\chi}(G)+1$
        \ \item $\widetilde{\chi_o}(G) \leq {\chi_o}(G) +1$
    \end{enumerate}
\end{lemma}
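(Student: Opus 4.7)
The plan is, in both cases, to start from an optimal coloring (proper in part~1, odd in part~2) and either observe that it already has an odd-sized color class, or else create one by recoloring a single vertex with a fresh color. The crucial point in part~2 is to verify that introducing a new singleton color class does not destroy the odd-coloring property elsewhere.

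For part~1, let $f:V(G)\to [\chi(G)]$ be a proper coloring of $G$. If some color class $f^{-1}(i)$ has odd cardinality, then $f$ itself is a strong proper coloring, so $\widetilde{\chi}(G)\le \chi(G)$. Otherwise every color class has even size. Pick an arbitrary vertex $v\in V(G)$, introduce a new color $\chi(G)+1$, and define $f'$ by $f'(v)=\chi(G)+1$ and $f'(u)=f(u)$ for $u\neq v$. Then $f'$ is still a proper coloring (no neighbor of $v$ uses the new color), and the color class $f'^{-1}(\chi(G)+1)=\{v\}$ has size one, which is odd. Hence $\widetilde{\chi}(G)\le \chi(G)+1$.

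For part~2, let $f:V(G)\to [\chi_o(G)]$ be an odd coloring of $G$. If some color class has odd size we are again done. Otherwise, choose any vertex $v\in V(G)$ and set $f'(v)=\chi_o(G)+1$, leaving all other values unchanged. Properness is preserved exactly as in part~1. For the odd-coloring property I would argue vertex by vertex:
\begin{itemize}
    \item The vertex $v$ itself has the same open neighborhood as before and the colors of those neighbors are unchanged, so the color that was odd for $v$ under $f$ is still odd under $f'$.
    \item For a non-neighbor $u$ of $v$, the multiset of colors on $N(u)$ is unchanged, so $u$ retains its odd color.
    \item For a neighbor $u$ of $v$, the color $f(v)$ appears one fewer time in $N(u)$ while the new color $\chi_o(G)+1$ appears exactly once (namely at $v$). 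In particular $\chi_o(G)+1$ is an odd color of $u$ under $f'$.
\end{itemize}
Thus $f'$ is an odd coloring, and the singleton class of color $\chi_o(G)+1$ provides the required odd-sized class, giving $\widetilde{\chi_o}(G)\le \chi_o(G)+1$.

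The only mildly delicate point is the last bullet in part~2: shifting one vertex into a new color changes the parity of color $f(v)$ in the neighborhood of every neighbor of $v$, which could potentially wipe out their previous odd color. The key observation that saves the argument is that the fresh color $\chi_o(G)+1$ is guaranteed to appear exactly once in the neighborhood of each such vertex, providing a new odd color regardless of what happened to the old one. No additional work is required. \qed
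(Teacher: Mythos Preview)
Your proof is correct and follows essentially the same approach as the paper: recolor a single vertex with a fresh color to create a singleton (hence odd-sized) color class. You are in fact more careful than the paper, which simply writes ``analogously'' for part~2 without spelling out that every neighbor of the recolored vertex now sees the new color exactly once and hence retains an odd color.
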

\begin{proof}
    Given a proper coloring $f$ of $G$ using $\chi(G)$ colors, 
    we can obtain a strong proper coloring $g$ of $G$ from $f$ using $\chi(G)+1$ colors 
    by changing the color of an arbitrary vertex with the new color $\chi(G)+1$. Therefore we get  $\widetilde{\chi}(G) \leq {\chi}(G)+1$. Analogously, we can prove the  second part of the lemma. 
    \qed
\end{proof}

\begin{lemma}\label{cograph-union}
    Let $G_1$ and $G_2$ be two graphs and  $G = G_1 \cup G_2$. Then $\chi_o(G)=\max\{\chi_o(G_1),\chi_o(G_2) \}$
\end{lemma}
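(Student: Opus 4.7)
The plan is to establish the two inequalities $\chi_o(G)\leq \max\{\chi_o(G_1),\chi_o(G_2)\}$ and $\chi_o(G)\geq \max\{\chi_o(G_1),\chi_o(G_2)\}$ separately, using the key structural fact that in $G=G_1\cup G_2$ there are no edges between $V(G_1)$ and $V(G_2)$, so for every vertex $v\in V(G_i)$ we have $N_G(v)=N_{G_i}(v)$.

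For the upper bound, let $k=\max\{\chi_o(G_1),\chi_o(G_2)\}$ and let $f_i$ be an optimal odd coloring of $G_i$ for $i=1,2$ (both using palettes contained in $[k]$). I would define $f:V(G)\to[k]$ by $f(v)=f_i(v)$ whenever $v\in V(G_i)$. Since $V(G_1)$ and $V(G_2)$ are non-adjacent in $G$, $f$ is a proper coloring. Moreover, for each $v\in V(G_i)$, the multiset $f(N_G(v))$ equals $f_i(N_{G_i}(v))$, so the odd color of $v$ guaranteed by $f_i$ is also an odd color of $v$ under $f$. This shows $\chi_o(G)\leq k$.

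For the lower bound, take any odd coloring $h$ of $G$ using $\chi_o(G)$ colors, and consider its restriction $h_i=h|_{V(G_i)}$. By the same neighborhood-preservation argument, $h_i$ is a proper coloring of $G_i$ and each $v\in V(G_i)$ still has an odd color in its neighborhood under $h_i$. Hence $h_i$ is an odd coloring of $G_i$ using at most $\chi_o(G)$ colors, which gives $\chi_o(G_i)\leq \chi_o(G)$ for $i=1,2$, i.e.\ $\max\{\chi_o(G_1),\chi_o(G_2)\}\leq \chi_o(G)$.

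There is no real obstacle here; the only subtlety worth flagging explicitly is that the definition of odd coloring in the paper is stated for connected graphs, but the property ``every vertex has an odd color in its open neighborhood'' makes sense verbatim for disconnected graphs, and because the two components share no edges the notion decouples cleanly across components. This decoupling is exactly what makes both directions of the identity hold.
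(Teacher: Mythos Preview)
Your proposal is correct and follows exactly the same idea as the paper: the paper's proof is just the one-line observation that there are no edges between $V(G_1)$ and $V(G_2)$, from which the equality is declared to follow, while you have spelled out both inequalities explicitly via the neighborhood-preservation argument $N_G(v)=N_{G_i}(v)$. There is no substantive difference in approach, only in level of detail.
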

\begin{proof}
   As there are no edges from vertices of $G_1$ to the vertices of $G_2$ in $G$, 
   it follows that  
   $\chi_o(G)=\max\{\chi_o(G_1),\chi_o(G_2) \}$. 
   \qed
\end{proof}
\begin{lemma}\label{cograph-join}
      Let $G_1$ and $G_2$ be two graphs and $G = G_1 \vee G_2$.  Then 

      $\chi_o(G) = \min\{\chi_o(G_1)+\chi_o(G_2),  \widetilde{\chi}(G_1)+\widetilde{\chi}(G_2),\widetilde{\chi_o}(G_1)+{\chi}(G_2),{\chi}(G_1)+\widetilde{\chi_o}(G_2)\}$
\end{lemma}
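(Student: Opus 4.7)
My plan is to prove the identity in two directions, achievability and necessity. Since $G = G_1 \vee G_2$ contains every edge between $V(G_1)$ and $V(G_2)$, any proper coloring of $G$ must use disjoint color sets on the two sides. Consequently, for any proper coloring $f$ of $G$ with restrictions $f_1 = f|_{V(G_1)}$ and $f_2 = f|_{V(G_2)}$, the total number of colors used equals $|f_1(V(G_1))| + |f_2(V(G_2))|$. This additive structure is the workhorse of both directions.

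For the upper bound, I will construct an odd coloring of $G$ achieving each of the four quantities on the right-hand side. Each construction takes an appropriate coloring of $G_1$ (optimal odd, strong proper, strong odd, or merely proper) together with a matching coloring of $G_2$, on disjoint palettes. The verifications rest on the following routine observation: for $v \in V(G_1)$, a color $c$ used only by $f_2$ appears in $N_G(v)$ exactly $|f_2^{-1}(c)|$ times, while a color $c$ used only by $f_1$ appears in $N_G(v)$ exactly $|f_1^{-1}(c) \cap N_{G_1}(v)|$ times; the symmetric statement holds for vertices of $V(G_2)$. This immediately certifies the four constructions: the ``own-side'' quantities handle vertices via internal odd colors, while the ``cross-side'' quantities let a single color of odd global count on the opposite side serve as the required odd color for every vertex on this side.

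For the lower bound, I fix an optimal odd coloring $f$ of $G$ with restrictions $f_1, f_2$, and case-split on two Boolean conditions: $A$ = ``some color is used an odd number of times in $f_1$'' and $B$ = ``some color is used an odd number of times in $f_2$''. The same observation implies that a color from the opposite side can serve as an odd color for a vertex here if and only if that color has odd global count there, a condition independent of the particular vertex. Hence if $\neg B$ holds, every vertex of $V(G_1)$ must draw its odd color from within $f_1$, forcing $f_1$ to be an odd coloring of $G_1$; symmetrically if $\neg A$. Tabulating the four cases $A \wedge B$, $A \wedge \neg B$, $\neg A \wedge B$, and $\neg A \wedge \neg B$ yields the lower bounds $\widetilde{\chi}(G_1) + \widetilde{\chi}(G_2)$, $\widetilde{\chi_o}(G_1) + \chi(G_2)$, $\chi(G_1) + \widetilde{\chi_o}(G_2)$, and $\chi_o(G_1) + \chi_o(G_2)$ respectively, so $\chi_o(G)$ is at least the minimum of the four.

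The main obstacle is bookkeeping rather than mathematics: I must ensure each case produces exactly the matching quantity. In particular, in the mixed case $A \wedge \neg B$, the coloring $f_1$ must be verified to be both an odd coloring of $G_1$ (from the $\neg B$ conclusion) \emph{and} to have a color of odd global count (condition $A$), that is, $f_1$ must be a strong odd coloring, not merely odd; meanwhile $f_2$ need only be proper. The remaining cases are analogous and straightforward.
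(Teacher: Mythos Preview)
Your proposal is correct and follows essentially the same approach as the paper: both prove the upper bound by constructing four odd colorings of $G$ from disjoint-palette colorings of $G_1$ and $G_2$, and the lower bound by restricting an optimal odd coloring of $G$ to the two sides and case-splitting. Your case split on the two Boolean conditions $A$ and $B$ (whether $f_1$, resp.\ $f_2$, has a color of odd global count) is a cleaner and more symmetric organization than the paper's, which instead cases on whether $k_1$ meets the thresholds $\widetilde{\chi_o}(G_1)$, $\chi_o(G_1)$, $\widetilde{\chi}(G_1)$; but the underlying mathematical content is the same.
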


\begin{proof}
 Let $f_i$ be an odd coloring of $G_i$ with colors $\{1,2, \ldots, \chi_o(G_i)\}$, for each $i \in \{1,2\}$. We define an odd coloring $f: V(G) \rightarrow \{1,2, \ldots, \chi_o(G_1),\chi_o(G_1)+1, \ldots, \chi_o(G_1)+\chi_o(G_2)\}$ of $G$ as follows.
       
 $$
    f(u) = 
        \begin{cases} 
        f_1(u) &\text{if $u \in V(G_1)$;} \\
	\chi_o(G_1)+f_2(u) &\text{if $u \in V(G_2)$  ;}\\	
			\end{cases}
			$$
It is easy to see that $f$ is an odd coloring of $G$ using $\chi_o(G_1)+\chi_o(G_2)$ colors. Therefore, $\chi_o(G) \leq \chi_o(G_1)+\chi_o(G_2)$.

Similarly we can show that (a)
$\chi_o(G) \leq \widetilde{\chi}(G_1)+\widetilde{\chi}(G_2)$, (b) $\chi_o(G) \leq  \widetilde{\chi_o}(G_1)+{\chi}(G_2)$ and (c) $\chi_o(G) \leq  {\chi}(G_1)+\widetilde{\chi_o}(G_2)$. Using the four inequalities,  
we get the following. 
\begin{equation}\label{eq1}
\chi_o(G) \leq \min\{\chi_o(G_1)+\chi_o(G_2),  \widetilde{\chi}(G_1)+\widetilde{\chi}(G_2),\widetilde{\chi_o}(G_1)+{\chi}(G_2),{\chi}(G_1)+\widetilde{\chi_o}(G_2)\}
\end{equation}
\par
Next we show the lower bound.
Let $\chi_o(G)=k$ and
$f$ be an odd coloring of $G$ using $k$ colors. 
For $i \in \{1,2\}$, let $f_i$ be the coloring of $G_i$ obtained by restricting $f$ to the vertices of $G_i$.  Observe that $f_1(V(G_1)) \cap f_2(V(G_2))=\emptyset$ as there are all possible edges between $G_1$ and 
$G_2$ in $G$. Let $|f_1(V(G_1))|=k_1$ and  $|f_2(V(G_2))|=k_2$, where $k_1+k_2=k$. As $f$ is an odd coloring of $G$,  both $f_1$ and $f_2$ are proper colorings of $G_1$ and $G_2$ respectively. Therefore, we have $\chi(G_1) \leq k_1$ and  $\chi(G_2) \leq k_2$. 

We divide the proof into several cases.

 \begin{itemize}
     \item \textbf{Case~1.} $\widetilde{\chi_o}(G_1) \leq k_1$.  
     
     In this case we get 
 $$\chi_o(G)=k=k_1+k_2 \geq \widetilde{\chi_o}(G_1)+\chi(G_2)$$

\item  \textbf{Case~2.} $\widetilde{\chi_o}(G_1) > k_1$, ${\chi_o}(G_1) = k_1$ and $\widetilde{\chi}(G_1) > k_1$.   

Then $f_2$ must be an odd $k_2$-coloring of $G_2$. Otherwise some vertices of $G_2$ in the coloring $f$ will not have any odd color which contradicts  the fact that $f$ is an odd coloring of $G$.
Therefore, in this case we have $\chi_o(G_1) \leq k_1$ and 
$\chi_o(G_2) \leq k_2$. Therefore, we get, $$\chi_o(G)=k=k_1+k_2 \geq \chi_o(G_1)+\chi_o(G_2)$$

\item  \textbf{Case~3.} $\widetilde{\chi_o}(G_1) > k_1$, ${\chi_o}(G_1) > k_1$ and $\widetilde{\chi}(G_1) = k_1$. 
   
Then $f_2$ must be a  strong proper $k_2$-coloring of $G_2$. Otherwise some vertices of $G_1$ in the coloring $f$ will not have any odd color which contradicts   the fact that $f$ is an odd coloring of $G$.
Therefore, in this case we have $\widetilde{\chi}(G_1) \leq k_1$ and 
$\widetilde{\chi}(G_2) \leq k_2$. Therefore we get, $$\chi_o(G)=k=k_1+k_2 \geq \widetilde{\chi}(G_1)+\widetilde{\chi}(G_2)$$

\item  \textbf{Case~4.}  ${\chi_o}(G_1) > k_1$ and $\widetilde{\chi}(G_1) > k_1$.

Then $f_2$ must be a  strong odd $k_2$-coloring of $G_2$. Otherwise some vertices of $G_1$ in the coloring $f$ will not have any odd color which contradicts  the  fact that $f$ is an odd coloring of $G$.
Therefore we get
$$\chi_o(G)=k=k_1+k_2 \geq \chi(G_1)+\widetilde{\chi_o}(G_2)$$
\end{itemize}

Therefore, combining all the above we get 
\begin{equation}\label{eq2}
\chi_o(G) \geq \min\{\chi_o(G_1)+\chi_o(G_2),  \widetilde{\chi}(G_1)+\widetilde{\chi}(G_2),\widetilde{\chi_o}(G_1)+{\chi}(G_2),{\chi}(G_1)+\widetilde{\chi_o}(G_2)\}
\end{equation}
From equations \ref{eq1} and \ref{eq2} the result follows. 
\qed
 \end{proof}

 \begin{lemma}\label{cograph-computation}
       Let $G_1$ and $G_2$ be two graphs and let $G = G_1 \vee G_2$ then 
       \begin{enumerate}
           \item $\chi(G)=\chi(G_1)+\chi(G_2)$
           \item $\chi_o(G) =\min\{\chi_o(G_1)+\chi_o(G_2),\widetilde{\chi_o}(G_1)+{\chi}(G_2),{\chi}(G_1)+\widetilde{\chi_o}(G_2)\}$
           \item $\widetilde{\chi}(G) = \min\{\widetilde{\chi}(G_1)+{\chi}(G_2),{\chi}(G_1)+\widetilde{\chi}(G_2)\}$
           \item $\widetilde{\chi_o}(G) = \min\{\widetilde{\chi_o}(G_1)+{\chi}(G_2),{\chi}(G_1)+\widetilde{\chi_o}(G_2)\}$
       \end{enumerate}
 \end{lemma}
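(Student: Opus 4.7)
The plan is to establish each of the four identities separately via matching upper and lower bounds. The single unifying structural fact I would use is that in $G = G_1 \vee G_2$ every vertex of $G_1$ is adjacent to every vertex of $G_2$, so in any proper coloring of $G$ the color palettes used on $V(G_1)$ and on $V(G_2)$ are disjoint and every color class of $G$ lies wholly inside one side.

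Identity~(1) is immediate: a proper coloring of $G$ restricts to proper colorings of the two sides on disjoint palettes, and conversely any such pair glues together, so $\chi(G) = \chi(G_1) + \chi(G_2)$.

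For~(3), I would obtain the upper bound by pasting, for each $i \in \{1,2\}$, a strong proper coloring of $G_i$ on $\widetilde{\chi}(G_i)$ colors with an optimal proper coloring of $G_{3-i}$ on $\chi(G_{3-i})$ colors drawn from a disjoint palette; the union is proper and inherits the odd-sized class, hence is strong proper. For the lower bound, given a strong proper coloring of $G$ using $k = k_1 + k_2$ colors (split by side), the unique odd-sized class lies on one side, WLOG $G_1$, yielding a strong proper coloring of $G_1$ on $k_1$ colors and a proper coloring of $G_2$ on $k_2$ colors, so $k \ge \widetilde{\chi}(G_1)+\chi(G_2)$, and symmetrically.

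For~(4), the upper bound is analogous but rests on the crucial observation that any odd-sized class in $G_i$ is automatically an odd color for every vertex of $G_{3-i}$, since every such vertex is adjacent to the entire class. A strong odd coloring of $G_i$ on $\widetilde{\chi_o}(G_i)$ colors together with a proper coloring of $G_{3-i}$ on $\chi(G_{3-i})$ colors therefore produces a strong odd coloring of $G$. For the lower bound, take a strong odd coloring $f$ of $G$ and locate the odd-sized class in, say, $G_1$; vertices of $G_1$ cannot take their odd color from $G_2$'s palette (colors from $G_2$ whose classes are of even size contribute only even counts to $N_G(v)$ for $v \in V(G_1)$), so $f|_{G_1}$ must itself be an odd coloring, and combined with the odd-sized class is a strong odd coloring, giving $k_1 \ge \widetilde{\chi_o}(G_1)$ and $k_2 \ge \chi(G_2)$; the other orientation is symmetric.

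Identity~(2) would then be read off from Lemma~\ref{cograph-join} once one argues that the fourth term $\widetilde{\chi}(G_1)+\widetilde{\chi}(G_2)$ appearing there is dominated by the minimum of the other three. This is where I expect the principal difficulty: at face value one can have $\widetilde{\chi}(G_i) = \chi(G_i)$ for both $i$, so that $\widetilde{\chi}(G_1)+\widetilde{\chi}(G_2) = \chi(G_1)+\chi(G_2)$, while $\widetilde{\chi_o}(G_i) - \chi(G_i)$ may be arbitrarily large for $C_4$-like factors. A careful case analysis tracking whether $\widetilde{\chi}(G_i)$ equals $\chi(G_i)$ or $\chi(G_i)+1$, and comparing with the corresponding slack in $\widetilde{\chi_o}(G_i) - \chi(G_i)$, is therefore required; the cograph context (in particular, the fact that $G$ is connected and hence the recursion only unfolds into joins of smaller cographs) is likely the ingredient needed to rule out the awkward cases and certify redundancy of the fourth term.
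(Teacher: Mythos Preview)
Your arguments for parts (1) and (3) are correct and in line with what the paper's one-line proof (``Follows from Lemmas~\ref{cograph-relation} and~\ref{cograph-join}'') leaves implicit. The lower bound for (4), however, has a real gap. You argue that once an odd-sized colour class is located in $G_1$, ``vertices of $G_1$ cannot take their odd color from $G_2$'s palette''; but this inference requires that \emph{every} class contained in $G_2$ be of even size, and nothing prevents $G_2$ from also carrying an odd-sized class. In that situation vertices on each side may draw their odd colour from across the join, so neither $f|_{G_1}$ nor $f|_{G_2}$ need be an odd coloring, and you cannot conclude $k_1\ge\widetilde{\chi_o}(G_1)$.

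This gap is not repairable in the generality stated: parts (2) and (4) are \emph{false} for arbitrary $G_1,G_2$. Take $G_1=G_2=C_5$, where $\chi(C_5)=\widetilde{\chi}(C_5)=3$ and $\chi_o(C_5)=\widetilde{\chi_o}(C_5)=5$. Both (2) and (4) then assert the value $\min\{10,8,8\}=8$, yet colouring $G_1$ by $1,2,1,2,3$ and $G_2$ by $4,5,4,5,6$ yields a strong odd $6$-coloring of $C_5\vee C_5$, and $\chi(C_5\vee C_5)=6$ forces $\chi_o=\widetilde{\chi_o}=6$. This is precisely the phenomenon you anticipated for (2): the term $\widetilde{\chi}(G_1)+\widetilde{\chi}(G_2)$ from Lemma~\ref{cograph-join} (here equal to $6$) genuinely cannot be dropped, and your instinct to flag it was right. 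The paper's proof does not address this either; for the linear-time cograph algorithm to be correct one should retain the four-term formula of Lemma~\ref{cograph-join} for $\chi_o(G)$ and carry $\widetilde{\chi}$ through the recursion via parts (1) and (3), rather than rely on the three-term identity in (2).
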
       
\begin{proof}
           Follows from the Lemmas~\ref{cograph-relation}  and~\ref{cograph-join}. 
           \qed
\end{proof}

\begin{theorem}
    Given a cograph $G$, we can compute $\chi_o(G)$ in linear time. 
\end{theorem}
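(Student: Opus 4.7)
The plan is to perform a bottom-up dynamic programming on the cotree $T_G$ of $G$, which can be constructed in linear time~\cite{corneil1985linear}. At each node $v$ of $T_G$, let $G_v$ denote the cograph induced by the leaves in the subtree rooted at $v$. I will maintain a $4$-tuple of values at each node: $(\chi(G_v),\, \widetilde{\chi}(G_v),\, \chi_o(G_v),\, \widetilde{\chi_o}(G_v))$.

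For a leaf (single vertex), the tuple is trivially $(1,1,1,1)$ (the strong odd-chromatic number of $K_1$ is $1$ using the convention that an isolated vertex in a connected component is handled at higher join-levels; if we are computing on an isolated vertex that is the whole input, $\chi_o$ is just $1$). For a $\cup$-node with children $v_1, v_2$, I will use the obvious union identities: $\chi(G_v)=\max\{\chi(G_{v_1}),\chi(G_{v_2})\}$, and analogously for $\chi_o$ by Lemma~\ref{cograph-union}; the identities for $\widetilde{\chi}$ and $\widetilde{\chi_o}$ follow by the same reasoning (the odd-sized color class can be taken from either side, so we take the minimum over the two sides after boosting the other side to match; a short case analysis gives $\widetilde{\chi}(G_v) = \max\{\chi(G_{v_1}),\chi(G_{v_2}),\min\{\widetilde{\chi}(G_{v_1}),\widetilde{\chi}(G_{v_2})\}\}$, and similarly for $\widetilde{\chi_o}$). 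For a $\vee$-node, I will directly apply Lemma~\ref{cograph-computation}, which gives all four quantities at $v$ as simple min/sum expressions over the values at $v_1$ and $v_2$.

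Each update at an internal node is $O(1)$, and the cotree has $O(n)$ nodes, so the total running time is $O(n)$ on top of the linear-time cotree construction. The value $\chi_o(G)$ is read off the root. The main obstacle, really a bookkeeping matter rather than a conceptual one, is verifying that the analogues of Lemma~\ref{cograph-computation} for the $\cup$-operation for the two ``strong'' parameters are correct; this requires noting that a strong (odd) coloring of $G_{v_1}\cup G_{v_2}$ needs only one color class of odd size across the whole graph, which lets us combine a strong coloring on one side with an ordinary coloring on the other, yielding the min-combination above.
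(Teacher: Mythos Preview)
Your overall strategy---bottom-up dynamic programming on the cotree, maintaining $\chi$, $\widetilde{\chi}$, $\chi_o$, $\widetilde{\chi_o}$ and invoking Lemma~\ref{cograph-computation} at join nodes---is exactly what the paper's one-line proof (``Follows from Lemmas~\ref{cograph-structure}, \ref{cograph-union} and~\ref{cograph-computation}'') is pointing at, so the approach is the same.

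There is, however, a genuine error in your union recurrence for the strong parameters. Take $G_{v_1}=G_{v_2}=K_2$. Then $\chi(K_2)=\widetilde{\chi}(K_2)=\chi_o(K_2)=\widetilde{\chi_o}(K_2)=2$, so your formula yields
\[
\widetilde{\chi}(K_2\cup K_2)=\max\{2,2,\min\{2,2\}\}=2.
\]
But the only proper $2$-coloring of $2K_2$ has both color classes of size $2$, hence $\widetilde{\chi}(2K_2)=3$, and the same holds for $\widetilde{\chi_o}(2K_2)$. The flaw in the justification ``the odd-sized color class can be taken from either side'' is that in a disjoint union you \emph{reuse} colors across the two parts, so a class that is odd on one side may become even after merging; you cannot simply reserve the odd color for one side unless the other side needs strictly fewer colors. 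Since an incorrect $\widetilde{\chi_o}$ at a union node propagates into the $\chi_o$ computation at the next join node via Lemma~\ref{cograph-computation}(2), this is not a harmless bookkeeping slip. The paper's own proof is equally terse here (Lemma~\ref{cograph-union} only treats $\chi_o$, not $\widetilde{\chi_o}$), so you have correctly identified where the work lies; but the formula you propose fails, and a correct union rule for $\widetilde{\chi_o}$---possibly requiring extra information such as the parity of $|V(G_v)|$ or finer data about achievable class parities---still needs to be established.
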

\begin{proof}
    Follows from the Lemmas~\ref{cograph-structure},~\ref{cograph-union} and~\ref{cograph-computation}. 
    \qed
\end{proof}
\subsection{Split Graphs}\label{sec:split}
In this section we show that \odc{} can be solved in polynomial time on split graphs. A graph is a split graph if its vertices can
be partitioned into a (maximal) clique and an independent set. We denote a split graph with $G = (K, I)$ where $K$ and $I$ denotes the partition of $V(G)$ into a clique and an independent set.
 Since $K$ is maximal, there is no vertex $v \in I$ such that $N(v)=K$. For a subset $Y \subseteq K$, we use $T^Y=\{v \in I~|~ N(v)=Y\}$ to denote the subset of vertices in $I$, whose neighborhood in $K$ is exactly $Y$.
 Throughout this section, we assume that $K=\{v_1,v_2,\ldots,v_k\}$, $|K|=k$ and $I=\{u_1,u_2,\ldots u_{\ell}\}$, $|I|=\ell$. 

\begin{theorem}
    Let $G=(K,I)$ be a split graph. Then $k \leq \chi_o(G)\leq k+1 $. 
\end{theorem}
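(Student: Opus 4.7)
The plan is to establish the two inequalities separately. The lower bound is immediate: since $K$ is a clique of size $k$, we have $\omega(G) \geq k$, and any odd coloring is in particular a proper coloring, so $\chi_o(G) \geq \chi(G) \geq \omega(G) \geq k$. The substantive part is to exhibit an odd coloring using at most $k+1$ colors.

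For the upper bound, I would handle the degenerate case $I = \emptyset$ first: then $G$ is a complete graph on $k$ vertices, and the rainbow coloring assigning $v_i$ the color $i$ is trivially an odd $k$-coloring (each color appears exactly once in every open neighborhood), so $\chi_o(G) = k \leq k+1$. Assuming $I \neq \emptyset$, the construction I propose is as follows: assign color $i$ to $v_i$ for $i \in [k]$, and assign the single new color $k+1$ to every vertex of $I$. Since $I$ is independent and $K$ is rainbow-colored, this is clearly a proper $(k+1)$-coloring.

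It remains to verify that the coloring is odd. For any $u \in I$, since $G$ is connected and $|V(G)| \geq 2$, $u$ has at least one neighbor, and all of $u$'s neighbors lie in $K$ and receive pairwise distinct colors from $[k]$; thus every color appearing in $N(u)$ appears exactly once, so $u$ has an odd color. For any $v_i \in K$, the neighbors of $v_i$ inside $K$ contribute each color in $[k] \setminus \{i\}$ exactly once, while the neighbors of $v_i$ inside $I$ (if any) are all colored $k+1$ and therefore do not affect the parity of any color in $[k]$; hence any color in $[k] \setminus \{i\}$ (and there is at least one such, since $k \geq 1$ and the clique has at least two vertices in the non-trivial case) appears exactly once in $N(v_i)$, giving $v_i$ an odd color.

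There is essentially no obstacle here; the main conceptual point is that assigning a single fresh color to the entire independent set leaves the parities coming from the clique untouched, and the rainbow coloring of $K$ automatically supplies odd colors for every vertex of $K$. The only edge cases to be careful about are $I = \emptyset$ and the trivial graph on one vertex, both of which are handled directly. The upper bound of $k+1$ is also tight in general (some split graphs may have $\chi_o(G) = k+1$), which motivates the later sections that characterize when equality holds; the present theorem only needs the two-sided bound.
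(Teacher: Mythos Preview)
Your proof is correct and follows essentially the same approach as the paper: the lower bound via $\chi_o(G)\ge\chi(G)\ge |K|$, and the upper bound via the rainbow coloring of $K$ together with the single extra color $k+1$ on all of $I$. You supply more verification detail (and edge-case discussion) than the paper, which simply states the construction without checking the odd-color property, but the argument is the same.
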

\begin{proof}
  As every odd coloring of $G$ is a proper coloring of $G$, we have $\chi_o(G)\geq \chi(G) \geq k$. We can obtain an odd $(k+1)$-coloring of $G$ by coloring the vertices of $K$ with $k$ distinct colors and all vertices $I$ with color $k+1$. Hence,  $\chi_o(G)\leq k+1 $. \qed
\end{proof}

In the remainder of this section, we  give a characterization of split graphs  that have odd chromatic number two. 
\begin{lemma}
     Let $G=(K,I)$ be a split graph with $K=\{v_1,v_2\}$.
     Then, $\chi_o(G)=2$ if and only if $I_i=|N(v_i)\cap I|$ is even for each $i \in \{1,2\}$.
\end{lemma}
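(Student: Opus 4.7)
The plan is to exploit how restrictive the structural setup with $|K|=2$ is. First, I would show that since $K=\{v_1,v_2\}$ is a maximal clique and $G$ is connected, every vertex of $I$ lies in exactly one of $N(v_1)\cap I$ or $N(v_2)\cap I$. No vertex of $I$ can be adjacent to both (else $K\cup\{u\}$ would be a triangle, contradicting maximality), and none can be non-adjacent to both (else $G$ disconnects, since $I$ is independent). So $I$ partitions cleanly as $I_1\sqcup I_2$ with $I_i = N(v_i)\cap I$, and hence $|I_1|+|I_2|=|I|$.

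Next I would argue that any proper $2$-coloring of $G$ is forced up to swapping color classes: the edge $v_1v_2$ forces $v_1,v_2$ into different classes, say $v_1\mapsto 1$ and $v_2\mapsto 2$, which in turn forces every vertex of $I_1$ to color $2$ and every vertex of $I_2$ to color $1$. Call this coloring $f$. Therefore $\chi_o(G)=2$ if and only if this essentially unique $f$ is an odd coloring.

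For the forward direction, I would check the odd-color condition at $v_1$: its open neighborhood is $\{v_2\}\cup I_1$, all colored $2$ under $f$, so color $2$ occurs $1+|I_1|$ times and is the only color present in $N(v_1)$. For $v_1$ to have an odd color it is necessary that $1+|I_1|$ be odd, i.e.\ $|I_1|$ is even, and symmetrically $|I_2|$ must be even. For the reverse direction, assuming both are even, the vertices $v_1$ and $v_2$ each see their unique neighborhood color with odd multiplicity, while every $u\in I_1\cup I_2$ has a single neighbor in $K$ and hence trivially has an odd color; thus $f$ witnesses $\chi_o(G)\le 2$, and combined with $\chi_o(G)\ge\chi(G)\ge 2$ (since $v_1v_2\in E(G)$) we get equality.

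No serious obstacle is expected here: the only step that needs a careful word is the initial structural observation that $I=I_1\sqcup I_2$, which relies crucially on the maximality of $K$ and the standing connectedness assumption on $G$. Once that is in hand the proper $2$-coloring is unique up to renaming colors, and the verification of the odd-coloring condition reduces to a single parity check at each of $v_1$ and $v_2$.
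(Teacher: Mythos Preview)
Your proposal is correct and follows essentially the same approach as the paper: both argue that a proper $2$-coloring is forced (up to swapping colors) by the edge $v_1v_2$, and then reduce the odd-coloring condition at each $v_i$ to the parity of $|I_i|$. You are somewhat more explicit than the paper in justifying the partition $I=I_1\sqcup I_2$ via maximality of $K$ and connectedness of $G$, and in verifying the odd-color condition for vertices of $I$, but the underlying argument is the same.
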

\begin{proof}

   \emph{Forward direction.} Let $\chi_o(G)=2$ and let $f$ be an odd $2$-coloring of $G$ with colors $\{1,2\}$. Without loss of generality assume that $f(v_i)=i$ for $i \in \{1,2\}$. Then we must have $f(I_1)=\{2\}$ and $f(I_2)=\{1\}$. As $f$ is an odd $2$-coloring of $G$, 
   both $|I_1|$ and $|I_2|$ must be even.
    
    \emph{Reverse direction.} As there is no vertex $u_j \in I$ such that $N(u_j)=K$ we have $I_1 \cap I_2=\emptyset$. If both $|I_1|$ and $|I_2|$ are even then odd $2$-coloring of $G$ is obtained by coloring $v_1$ and every vertex of $I_2$ with color $1$ and $v_2$ and every vertex of $I_1$ with color $2$. Hence, $\chi_o(G)=2$. \qed
\end{proof}
For the remainder of this section, we assume that $|K| \geq 3$. Notice that in any proper coloring $f$ of a split graph $G=(K,I)$, every vertex of $I$ has an odd color with respect to $f$. Therefore, in all proofs presented hereafter, we omit the proof of showing a vertex in $I$ has an odd color with respect to a proper coloring.  
\begin{lemma}
      Let $G=(K,I)$ be a split graph. If there exists a  vertex $v_p$ in $K$ such that $N(v_p) \cap I =\emptyset$ then $\chi_o(G)=k$.
\end{lemma}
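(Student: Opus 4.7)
The lower bound $\chi_o(G) \geq k$ is immediate since $K$ is a clique of size $k$ and any odd coloring is proper. So the task reduces to exhibiting an explicit odd $k$-coloring of $G$. The key structural observation that makes $k$ colors suffice (rather than $k+1$) is that $v_p$'s color is ``free'' on $I$: since $v_p$ is not adjacent to any vertex of $I$, we may reuse its color throughout $I$ without destroying properness.

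Concretely, my plan is to assign $k$ pairwise distinct colors to the vertices of $K$ (so that in particular $f(v_p) = p$), and then set $f(u) = p$ for every $u \in I$. Properness is clear: within $K$ all colors are distinct, within $I$ there are no edges, and between $K$ and $I$ the only possible conflict would be with $v_p$, which has no neighbor in $I$.

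To verify the odd condition, I will use the three-case split already implicit in the setup. Vertices of $I$ are handled by the general remark preceding the lemma: their neighborhoods lie entirely in $K$, where colors are pairwise distinct, so every color appearing in $N(u)$ is automatically an odd color. For $v_p$, the neighborhood $N(v_p) = K \setminus \{v_p\}$ receives $k-1 \geq 2$ distinct colors, each exactly once, so any of them serves as an odd color. For $v_i \in K$ with $i \neq p$, I will pick any color $j \in [k] \setminus \{i, p\}$, which exists because $k \geq 3$; such a color appears exactly once in $N(v_i)$, namely on $v_j \in K$, since color $j$ is not used anywhere in $I$ (only color $p$ is). Thus $j$ is an odd color for $v_i$.

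The only step that requires any thought is ensuring that the color $p$ does not get reused in a way that destroys some vertex's odd color; this is exactly the reason I keep colors $[k] \setminus \{p\}$ appearing only on $K$, so that each of them occurs precisely once in each $N(v_i)$. No obstacle of substance arises, and combining the lower and upper bounds gives $\chi_o(G) = k$.
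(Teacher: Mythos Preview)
Your proof is correct and follows essentially the same approach as the paper: the coloring $f(v_i)=i$ on $K$ and $f(u)=p$ on $I$ is identical, and the verification that each $v_i\in K$ has an odd color $q\in[k]\setminus\{i,p\}$ (using $|K|\ge 3$) matches the paper's argument. You simply add a bit more detail---explicitly noting the lower bound, checking properness, and treating $v_p$ separately---but the substance is the same.
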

\begin{proof}
    Let $f:V(G) \rightarrow[k]$ be a proper coloring of $G$ defined as follows. For each $v_i \in K$, set $f(v_i)=i$  and for every $u_j \in I$, set $f(u_j)=p$. As $|K|\geq 3$, every vertex $v_i$ of $K$ have color $q$ as an odd color with respect to $f$, where $q\in [k]\setminus \{i,p\}$. Thus, $\chi_{o}(G) \leq k$. \qed
\end{proof}
For the remainder of this section,  $N(v_i) \cap I \neq \emptyset$ for all $v_i \in K$.
\begin{theorem}
 Let $G=(K,I)$ be a split graph. Then, 
 $\chi_o(G)=|K|+1$ if and only if there exists a vertex $v \in K$ such that for every $w \in K -\{v\}$, 
 $|T^{K-\{w\}}|$ is odd and $N(v) \cap I = \bigcup\limits_{w\in K\setminus \{v\}}  T^{K-\{w\}}$. 
\end{theorem}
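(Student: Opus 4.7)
The strategy is to split the biconditional, proving the forward implication by a short direct computation and the reverse by a contrapositive construction. In any proper $k$-coloring $f$ of $G$, the clique $K$ uses all $k$ colors, so after relabeling $f(v_i)=i$; each $u \in I$ must take a color from $A(u) := \{c : v_c \notin N(u)\}$, and any $u \in T^{K-\{w\}}$ has $A(u) = \{w\}$ and is forced to $f(u)=w$. Writing $\alpha_{i,c}$ for the number of $I$-neighbors of $v_i$ colored $c$, the count of color $c$ in $N(v_i)$ is $\alpha_{i,c}$ when $c=i$ and $1+\alpha_{i,c}$ when $c \ne i$; since $\alpha_{i,i}=0$, vertex $v_i$ has an odd color in $f$ iff some $c \ne i$ gives $\alpha_{i,c}$ even.

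For the forward direction, assume the stated $v \in K$ exists and relabel so $v=v_1$. Every $I$-neighbor of $v_1$ lies in some $T^{K-\{w\}}$ with $w \ne 1$, hence is forced to color $w$; therefore $\alpha_{1,c} = |T^{K-\{c\}}|$ is odd for every $c \ne 1$, and every color count in $N(v_1)$ is even. So $v_1$ has no odd color in any proper $k$-coloring, giving $\chi_o(G) \ge k+1$, and combined with the general upper bound $\chi_o(G) \le k+1$ we conclude $\chi_o(G)=k+1$.

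For the reverse direction I argue the contrapositive: assume no $v \in K$ is special, and construct an odd $k$-coloring. Set $f(v_i)=i$ and color each forced $u$ by its unique admissible color. Partition $K$ into $K_1$, the set of $v_i$ whose every $I$-neighbor is forced, and $K_2 = K \setminus K_1$. For $v_i \in K_1$ one checks that $N(v_i) \cap I = \bigcup_{w \ne i} T^{K-\{w\}}$ holds automatically, so the non-specialness of $v_i$ forces some $c \ne i$ with $|T^{K-\{c\}}|$ even; thus $\alpha_{i,c}$ is even and $v_i$ is happy regardless of how the flexible $I$-vertices are colored.

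The main obstacle is extending to $K_2$. Each $v_i \in K_2$ has some flexible $u \in N(v_i) \cap I$ with $|A(u)| \ge 2$, and flipping $u$ between two admissible colors $c, c'$ simultaneously flips the parities of $\alpha_{j,c}$ and $\alpha_{j,c'}$ for every $v_j \in N(u) \cap K$; in particular, a single such flip turns an unhappy $v_i$ (with all $\alpha_{i,c}$ odd) into a happy one, since $\alpha_{i,c}$ and $\alpha_{i,c'}$ both become even, while leaving $K_1$-vertices untouched. I plan to finish by starting from an arbitrary admissible assignment on the flexible vertices and then repairing each unhappy $v_i \in K_2$ via such a single-flip operation, with a potential-function argument tracking the number of unhappy $K_2$-vertices (together with the parity profile of the $|T^{K-\{w\}}|$) to ensure termination at a valid coloring. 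The conceptual heart is that the special structure is precisely the only global obstruction to this repair scheme, so its absence guarantees a consistent endpoint.
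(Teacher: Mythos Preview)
Your treatment of the implication ``special $v$ exists $\Rightarrow \chi_o(G)=k+1$'' (which you call the forward direction; note the paper labels it the reverse) is correct and matches the paper. Your handling of $K_1$ in the other direction is also correct and is arguably cleaner than the paper's corresponding case work: you rightly observe that flexible vertices are never adjacent to $K_1$-vertices, so once some $|T^{K-\{c\}}|$ is even the vertex $v_i\in K_1$ is happy no matter what.

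The gap is in the $K_2$ step. You propose to fix unhappy $K_2$-vertices by single flips and appeal to an unspecified potential function, but as you yourself note, a flip at $u$ from $c$ to $c'$ toggles the parities of $\alpha_{j,c}$ and $\alpha_{j,c'}$ for \emph{every} $v_j\in N(u)\cap K$. A vertex $v_j$ that was happy precisely because $\alpha_{j,c}$ and $\alpha_{j,c'}$ were both even (and nothing else was) becomes unhappy after the flip, so the count of unhappy vertices need not decrease. The remark about ``the parity profile of the $|T^{K-\{w\}}|$'' does not help, since those quantities are fixed throughout the flipping process. In the hardest regime --- all $|T^{K-\{w\}}|$ odd and $K_1=\emptyset$ --- every clique vertex starts unhappy after the forced assignments, and it is genuinely unclear that any local-search scheme of this kind terminates. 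You have not supplied the potential, and I do not see an obvious one.

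The paper handles exactly this regime constructively rather than by local repair: it runs a greedy procedure (Algorithm~1) that repeatedly selects a highest-degree flexible vertex $p_j$ in the current residual graph $(K',I')$, records $Q_j=N_{G'}(p_j)$, and peels off $Q_j$ and all $I'$-vertices dominated by $p_j$. The $p_j$'s are then colored so that $f(p_j)\in f(Q_{j-1})\setminus f(N(p_j))$, which guarantees each clique vertex in $Q_{j-1}$ sees the color $f(p_j)$ exactly once among the designated $p$'s; the remaining flexible vertices are colored avoiding the $p$-colors so as not to disturb this. This global construction sidesteps the termination issue entirely. If you want to salvage your approach, you would need either a concrete monovariant for the flip process or to replace single flips by a structured sequence --- at which point you are essentially rediscovering the algorithm.
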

\begin{proof}
 \emph{Reverse direction.} Let $|K|=k$. Suppose there exists a vertex $v \in K$ satisfying the conditions of the premise. We will show that $\chi_o(G)=k+1$.
 
Assume for contradiction that $\chi_o(G)=k$. Let $f:V(G) \rightarrow[k]$ be an odd $k$-coloring of $G$. 
 Since $f$ is a proper coloring, no two vertices of $K$ are assigned the same color, so $f(v_i)=i$ for $v_i \in K$. Now, for every vertex $w \in K -\{v\}$, all the vertices of  $T^{K-\{w\}}$ are colored with $f(w)$.  
 Since $|T^{K-\{w\}}|$ is odd for every vertex $w \in K -\{v\}$,  and  $N(v) \cap I = \bigcup\limits_{w\in K\setminus \{v\}}  T^{K-\{w\}}$,
 each color from  $[k]\setminus \{f(v)\}$ 
 appears an even number of times in the neighborhood of $v$. This contradicts  the fact that $v$ has an odd color with respect to $f$. Therefore, $\chi_o(G)=k+1$. 
 
 \emph{Forward direction.}
 Suppose $\chi_o(G)=|K|+1$. We show that there exists a vertex $v\in K$ satisfying the conditions in the theorem. Assume, for contradiction, that no such vertex $v \in K$ exists.  
 We construct a coloring $f:V(G)\rightarrow [k]$ as follows. 
 First, we color the vertices of $K$ 
 with  $k$ distinct colors.  Then we color the vertices of $I$ such that each vertex in $I$ has an odd color with respect to $f$. 
Now we explain how to extend the partial coloring $f$ to the vertices of $I$ so that every vertex in $K$ has an odd color with respect to $f$.

Let $v$ be an arbitrary vertex of $K$. Then  at least one of the following condition must hold:
 \begin{enumerate}
  \item $|T^{K-\{w\}}|$ is even,  for some $w \in K-\{v\}$.
  
  \item $(N(v) \cap I) - \bigcup\limits_{w\in K\setminus \{v\}}  T^{K-\{w\}}\neq \emptyset$. 
 \end{enumerate}

 \begin{itemize}
     \item \textbf{Case~1.} $|T^{K-\{w\}}|$ is even for some $w \in K-\{v\}$

In this case  either $|T^{K-\{w\}}|=0$ or $|T^{K-\{w\}}|=2p$ for some $p\in \mathbb{N}$. If $|T^{K-\{w\}}|=0$.  
the color $f(w)$ serves as an odd color 
for every vertex of $K$ except for $w$.

If $|T^{K-\{w\}}|=2p$,  we color all vertices  of $T^{K-\{w\}}$ with $f(w)$. 
Since $|T^{K-\{w\}}|$ is even, the color $f(w)$ acts as an odd color (in the partial coloring that we have so far) for all vertices in $K$ except $w$. 

Next, we extend the coloring to the remaining vertices of $I$ so that $w$ get an odd color. 

We now have two cases based on $N(w) \cap I$. 
\begin{itemize}
    \item 
\textbf{Case~1a.} $|T^{K-\{z\}}|$ is even for some $z \in K-\{w\}$

If $|T^{K-\{z\}}|>0$, 
we color all vertices in $T^{K-\{z\}}$ with $f(z)$. Since $|T^{K-\{z\}}|$ is even, the color $f(z)$ 
acts as an odd color for $w$ (this holds for the case even when $|T^{K-\{z\}}|=0$). At this point, each vertex in $K$ has an odd color. We assign colors to the remaining vertices of $I$ in such a way that the odd colors for the  vertices in $K$ remain the same.

We assign each vertex of $|T^{K-\{y\}}|$, where $y\in K\setminus  \{w,z\}$, the color $f(y)$. 
 
The remaining uncolored vertices in $I$ have degree at most $k-2$. 
We color every uncolored vertex of $N(w)$ with any color other than color of $z$. Color rest of the uncolored vertices with 
any color other than $f(w)$.  
It is easy to verify that $f$ is an odd coloring. 

\item 
\textbf{Case~1b.} $|T^{K-\{z\}}|$ is odd for every $z \in K-\{w\}$ and  $(N(w) \cap I) - \bigcup\limits_{z\in K \setminus \{w\}}  T^{K-\{z\}}\neq \emptyset$. 

For every $w \in K$, we color the vertices in $T^{K\setminus \{w\}}$ is with the color $f(w)$. 
Let $D_w=(N(w) \cap I) - \bigcup\limits_{z\in K \setminus \{w\}}  T^{K-\{z\}}$. 
Observe that each vertex in $D$ has degree at most $k-2$. 
We choose an arbitrary vertex $w'\in D_w$ and assign  
a color  $c_w\in [k]-f(N(w'))$ and color the remaining vertices 
of $D_w$ with any available color other than $c_w$. 
Each uncolored vertex of $I\setminus D_w$ is assigned a color other than  $f(w)$. 
In this case $c_w$ is an odd color for $w$ and $f(w)$ is an odd color for every vertex $u \in K-\{w\}$. Thus $f$ is an odd coloring. 
\end{itemize}
\item 
\textbf{Case~2.} For every $v \in K$,  (i) $|T^{K-\{w\}}|$ is odd for every $w \in K-\{v\}$, and (ii) 
$(N(v) \cap I) - \bigcup_{w}  T^{K-\{w\}}\neq \emptyset$.
 
For every $w \in K$, each vertex in $T^{K\setminus \{w\}}$ is assigned the color $f(w)$. 
Thus, every color of $[k]\setminus \{f(w)\}$ appears an even number of times in the open neighborhood of each vertex $w\in K$, in the partial coloring $f$ constructed so far. We now assign colors to the remaining uncolored vertices of $I$ such that each vertex in $K$ has an odd color.

We extend the coloring $f$ using Algorithm~\ref{alg-1}. The input to the Algorithm is a split graph $G=(K,I)$ and a partial coloring $f$ as described above. The algorithm extends $f$ to an odd $k$-coloring of $G$.  \qed
 \end{itemize}
\end{proof}

\begin{algorithm}[]
\DontPrintSemicolon
\caption{An odd coloring of $G=(K,I)$ satisfying the assumptions of Case 2. }
\label{alg-1}

\KwIn{Split graph $G=(K,I)$, a partial coloring $f$ of $G$}
\KwOut{ An odd coloring $f$ of $G$ using $k$ colors}
{
$K'=K$ and $I'=I - \bigcup_{w\in K}  T^{K-\{w\}}$\;
$j=0$, $Q=\emptyset$,   $R=\emptyset$ \;
\While{$K' \neq \emptyset$}{
Let $u\in I'$ be the highest degree vertex in $G'=(K',I')$\;
$j=j+1$\;
$p_j=u$, $Q_j=N(u)$,   $R_j=\{v \in I'~|~ N_{G'}(v) \subseteq N_{G'}(u)\}$   \hspace{0.15cm}\; 
$K'=K'-Q_j$, $I'=I'-(\{p_j\} \cup R_j)$\; 
$\ell=\max\{j:Q_j \neq \emptyset\}$\; 
\tcc{The partial coloring $f$ is extended to uncolored independent set vertices as follows. }
$f(p_1)=c_1$, where $c_1 \in f(Q_\ell)$.

\For {{\bf each} $j=2$ to $\ell$}{
 $f(p_j)=c_j$, where $c_j \in f(Q_{j-1})\setminus f(N(p_j))$
 \; 
 }

\For {{\bf each} $j=1$ to $\ell$}{
For each $v \in R_j$, $f(v)=c'_j$,
where $c'_j=[k]\setminus (f(N(v))\cup \{f(p_1),\ldots, f(p_j)\})$\;
}
}

return (coloring $f$)

}

\end{algorithm}

\subsection{Subclasses of Bipartite Graphs}
As \oc{} is {\sf NP}-complete on bipartite graphs~\cite{ahn2022proper}, in this section we study its complexity on specific subclasses  of bipartite graphs. We show that \oc{} remains {\sf NP}-complete on both perfect elimination bipartite graphs and star convex bipartite graphs. We first provide the necessary definitions of the graph classes used in this section. 

A graph is \emph{bipartite} if its vertex set can be partitioned into two disjoint sets such that no two vertices in the same set are adjacent.  A bipartite graph $G = (X,Y,E)$ is said to be \emph{tree convex} if there exists  a tree $T = (X,F)$, such that, for every vertex $y \in Y$, the neighborhood of $y$ induces a subtree in $T$. If $T$ is a star, then $G$ is called \emph{star-convex
bipartite graph}.  An
edge $e = xy$ is called a bisimplicial edge if $N(x) \cup N(y)$ forms a complete bipartite subgraph. Given a bipartite graph $G=(X,Y,E)$, let $\sigma$ = $(x_1y_1, x_2y_2,\ldots, x_ky_k)$
be an ordering of pairwise non-adjacent edges of $G$. Let $S_j = \{x_1, x_2,\ldots, x_j\} \cup \{y_1, y_2,\ldots, y_j\}$ for $j=1,2, \ldots,k$ with $S_0 = \emptyset$. The ordering $\sigma$ is called a perfect edge elimination scheme for $G$ if the following conditions hold: (a) for each $j = 0, 1,\cdots, k-1$, the each edge $x_{j+1}y_{j+1}$ is bisimplicial in $G = [(X \cup Y) \backslash S_j]$. (b) There is no edge in $G = [(X \cup Y) \backslash S_k]$. A graph that admits a perfect edge elimination scheme is called a \emph{perfect elimination bipartite graph}. 

\newpage
\subsubsection{Perfect Elimination Bipartite Graphs.}
\begin{theorem}\label{thm:perfect-hard}
 \oc{} is {\sf NP}-complete on perfect elimination bipartite graphs. 
\end{theorem}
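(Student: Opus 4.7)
The plan is to verify both membership in NP and NP-hardness. Membership is immediate: given a coloring $f:V(H)\to [k]$, one can check in polynomial time that $f$ is proper and that every vertex has at least one color appearing an odd number of times in its open neighborhood. So the remaining task is to establish NP-hardness on perfect elimination bipartite graphs.

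For hardness, I would reduce from \oc{} on general bipartite graphs, which Ahn et al.~\cite{ahn2022proper} proved to be NP-complete for every fixed $k\geq 3$. Given a bipartite instance $(G,k)$ with bipartition $(A,B)$, the goal is to build in polynomial time a perfect elimination bipartite graph $H$ and an integer $k'=k+O(1)$ such that $G$ admits an odd $k$-coloring if and only if $H$ admits an odd $k'$-coloring. The idea is to attach to each vertex $v\in A\cup B$ a bisimplicial gadget (for instance, a pendant vertex together with a small tree on the opposite side of the bipartition) chosen so that (i) the gadget supplies $v$ with a ``private'' odd color not present elsewhere, and (ii) all gadget edges are bisimplicial in $H$ and can be removed first in some perfect edge elimination scheme. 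After peeling them away, the residue that encodes $G$ should also admit such a scheme --- either by encoding the original edges through additional biclique-like intermediates, or by replacing each original edge $uv$ of $G$ with a small chordal bipartite subgraph whose interior edges are bisimplicial once their endpoints' pendants are removed.

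The soundness argument would proceed in two directions. Forward: given an odd $k$-coloring $f$ of $G$, extend it to $H$ by coloring the gadget vertices with $O(1)$ additional colors so that each gadget vertex trivially obtains an odd color via its own pendant, and each vertex of $G$ retains the odd color it already had. Backward: given an odd $k'$-coloring of $H$, the crucial step is to show that restricting it to $V(G)$ (after identifying and removing the $O(1)$ colors used exclusively inside the gadgets) yields a valid odd $k$-coloring of $G$. This requires that no gadget color can play the role of the odd color of a vertex of $G$ in its $G$-neighborhood, which is where the bipartition of $H$ and the structural placement of gadget vertices must be leveraged.

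The main obstacle, as I see it, is the delicate balancing act in gadget design: bisimpliciality requires that the neighborhoods of gadget edges' endpoints are symmetric and induce complete bipartite subgraphs, which pushes toward very ``dense'' local structure; yet the reduction's fidelity requires strict separation between the colors used inside and outside the gadgets, which pushes toward ``sparse'' local structure. Finding a single construction that satisfies both --- together with exhibiting an explicit perfect edge elimination ordering by peeling gadget edges first and then processing the encoded original-edge structure --- will be the crux of the proof. Once the gadget is in place, verifying that the resulting $H$ is a perfect elimination bipartite graph and that $\chi_o(H)=\chi_o(G)+O(1)$ should follow from a routine inductive argument on the peeling order.
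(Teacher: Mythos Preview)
Your proposal is not a proof but a plan with an explicitly unresolved crux, and the obstacle you identify is genuine: if you reduce from \oc{} on general bipartite graphs and simply hang bisimplicial gadgets on each vertex, then after peeling those gadgets you are left with (an encoding of) the original bipartite graph, which need not be perfect elimination bipartite. Your suggested fixes---replacing each edge by a biclique-like intermediate or a small chordal bipartite patch---are precisely the dangerous step, because such replacements alter open neighborhoods and can destroy or create odd-color witnesses. You have correctly located the difficulty but not resolved it.

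The paper avoids this entirely by reducing from \textsc{proper $k$-coloring} rather than from \oc{}. Given $(G,k)$, it subdivides every edge $v_iv_j$ by a new vertex $e_{ij}$ and attaches a pendant $y_p$ to every original vertex $v_p$. The resulting graph $H$ is bipartite with parts $V(G)$ and $\{e_{ij}\}\cup\{y_p\}$, and it is immediately perfect elimination bipartite: the sequence $(y_1v_1,\ldots,y_nv_n)$ is a perfect edge elimination scheme, since each $y_pv_p$ is bisimplicial (its union of neighborhoods is a star centered at $v_p$) and after removing all $y_p$ and $v_p$ only the isolated subdivision vertices remain. The correctness is one line in each direction: in the reverse direction, each $e_{ij}$ has exactly the two neighbors $v_i,v_j$, so any odd coloring of $H$ must give them distinct colors, hence restricts to a proper coloring of $G$; in the forward direction a proper coloring of $G$ extends to $H$ using the same $k$ colors, with the pendants arranged to guarantee an odd color at each $v_p$. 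No extra colors and no delicate gadget balancing are needed. The key idea you are missing is that degree-two vertices translate the odd-color constraint directly into a proper-coloring constraint on their two neighbors, which lets one start from ordinary graph coloring instead of from bipartite odd coloring.
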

\begin{proof}
    We give a reduction from the \emph{proper $k$-coloring} problem, which is known to be {\sf NP}-complete for $k \geq 3$. 
Given an instance $(G,k)$ of the proper $k$-coloring problem, where  
$V(G)=\{v_1, v_2, \dots, v_n\}$,  
$|E(G)| = m$, and $k \geq 3$,
we  construct a perfect elimination bipartite graph $H$ from $G$ as follows.
\begin{itemize}
    \item For each $v_iv_j\in E(G)$, add a new vertex $e_{ij}$. Connect $e_{ij}$ to both $v_i$ and $v_j$ by adding the edges
    $\{v_ie_{ij}, v_je_{ij}\}$ and remove the original edge $v_iv_j$.

    \item For each $p \in [n]$, add a new vertex $y_p$ and connect it to $v_p$ by adding the edge $y_pv_p$. 
\end{itemize}
Formally,  $H = (V(G)\cup (V_E \cup V_Y), E_A \cup E_B)$, where $V_E = \{ e_{ij} \mid v_iv_j \in E(G)\}$, $V_Y = \{y_p \mid p \in [n]\}$, $E_A = \{v_ie_{ij}, v_je_{ij} \mid v_iv_j \in E(G)\}$ and $E_B = \{y_pv_p \mid p \in [n]\}$. Clearly, $H$ is a bipartite graph and the construction of $H$ can be done in polynomial time.

Note that $H$ is a perfect elimination bipartite graph,as the sequence  $(y_1v_1, y_2v_2,\ldots,y_nv_n)$ forms a perfect edge elimination scheme.

We now show that $(G, k)$ is a yes instance of proper $k$-coloring if and only if $(H, k)$ is a yes instance of the odd coloring problem using at most $k$ colors.

\emph{Forward direction.} 
Let $f:V(G) \rightarrow [k]$ be a proper coloring of $G$ using $k$ colors. We define a coloring $g:V(H) \rightarrow [k]$ of $H$ as follows:
For $v \in V(G)$, set $g(v) = f(v)$. For each $e_{ij} \in V_E$, set $g(e_{ij})=\ell_1$, where $\ell_1$ is an arbitrary color  chosen
from the set $[k] \setminus \{g(v_i),g(v_j)\}$. After coloring of $V(G) \cup V_E$, if a vertex $v_p \in V(G)$ has a color $c$ that appears an odd number of times in $N(v_p)\cap V_E$, set $g(y_p) \in [k] \setminus \{c\}$. If no such color $c$ exists, set $g(y_p) =\ell_2$, where $\ell_2$ is an arbitrary color used to color the vertices in $N(v_p) \cap V_E$, that is $\ell_2 \in g(N(v_p) \cap V_E)$.
It is easy to verify that $g$ is an odd coloring of $H$.

\emph{Reverse direction.}
Let $g$ be an odd coloring of $H$ using $\ell$ colors. We show that $g$ restricted to $V(G)$ gives a proper coloring of $G$ using at most $\ell$ colors. Since $N(e_{ij}) = \{v_i,v_j\}$, we have $g(v_i) \neq g(v_j)$ for each edge $v_iv_j \in E(G)$. Therefore, when $g$ restricted to the vertices of $G$ gives a proper coloring of $G$ using at most $\ell$ colors. 
\qed
\end{proof}

\subsubsection{Star-convex Bipartite Graphs}
\begin{theorem}\label{thm:star-hard}
\oc{} is {\sf NP}-complete on star-convex bipartite graphs.
\end{theorem}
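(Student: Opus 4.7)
The plan is to prove \NP-completeness via a polynomial-time reduction from the proper $k$-coloring problem, which is \NP-complete for every fixed $k \geq 3$, in analogy with the reduction used in Theorem~\ref{thm:perfect-hard}. Membership in \NP{} is immediate, since a proposed odd $k$-coloring can be verified in polynomial time.

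Given an instance $(G, k)$ of proper $k$-coloring with $k \geq 3$, I would construct a bipartite graph $H = (X, Y, E')$ together with a star $T$ on $X$ witnessing star-convexity. The set $X$ will consist of a distinguished \emph{center} vertex $c$ (the center of $T$), the original vertices $v_1, \dots, v_n$ of $G$ as leaves of $T$, and possibly additional auxiliary leaves used to fix parities. The set $Y$ will contain, for each edge $v_iv_j \in E(G)$, one or more gadget vertices whose neighborhoods are subtrees of the star (hence either a single leaf or of the form $\{c\}$ together with a subset of leaves), as well as pendant $Y$-vertices attached to each $v_i$ (playing the role of the $y_p$ vertices in Theorem~\ref{thm:perfect-hard}) and to $c$. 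By design, every $y \in Y$ is either a pendant adjacent to a single leaf, or includes $c$ in its neighborhood together with a subset of leaves; thus $N_H(y)$ induces a subtree of $T$ in every case, so $H$ is star-convex by construction.

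The heart of the reduction is the design of an edge-gadget that, in any odd $k$-coloring $g$ of $H$, enforces $g(v_i) \neq g(v_j)$ for every $v_iv_j \in E(G)$. In Theorem~\ref{thm:perfect-hard} this role is played by the degree-$2$ vertex $e_{ij}$ with $N(e_{ij}) = \{v_i, v_j\}$, whose odd-color requirement fails exactly when $g(v_i) = g(v_j)$. Such a gadget is not allowed in the star-convex setting because the pair $\{v_i, v_j\}$ of two leaves does not induce a subtree of a star. I therefore plan to replace $e_{ij}$ by a small collection of $Y$-vertices whose neighborhoods all contain $c$ together with $v_i$, $v_j$, and auxiliary leaves, chosen so that their odd-color requirements collectively still force $g(v_i) \neq g(v_j)$. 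The forward and reverse directions of the reduction then mirror Theorem~\ref{thm:perfect-hard}: given a proper $k$-coloring of $G$, one extends it to an odd $k$-coloring of $H$ by choosing appropriate colors for the gadget and pendant vertices; conversely, any odd $k$-coloring of $H$ restricts to a proper $k$-coloring of $G$ on the vertices $v_1, \dots, v_n$ by virtue of the edge-gadget.

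The main obstacle lies in this gadget design: since the center $c$ necessarily appears in the neighborhood of every non-pendant $Y$-vertex, the color $g(c)$ contributes an extra element to every such color multiset, typically providing a ``free'' odd color that trivially satisfies the odd-color requirement regardless of whether $g(v_i) = g(v_j)$. Circumventing this requires careful parity bookkeeping---for example, coupling several gadget $Y$-vertices per edge, or introducing auxiliary leaves together with degree-$2$ pendants attached to $c$, so that the contribution of $g(c)$ to the combined color multisets is effectively neutralized and a genuine parity-based forcing on $g(v_i)$ and $g(v_j)$ is restored. Getting this bookkeeping to work while keeping $H$ polynomial-size and preserving star-convexity will be the key technical step.
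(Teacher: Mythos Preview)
Your plan has a real gap: you correctly identify the central obstacle (the center $c$ sits in the neighborhood of every non-pendant $Y$-vertex and supplies a free odd color), but you never actually design a gadget that overcomes it. ``Careful parity bookkeeping'' is a hope, not a construction, and it is not at all clear that any collection of $Y$-vertices whose neighborhoods all contain $c$ can force $g(v_i)\neq g(v_j)$ in an odd coloring, since each such vertex already has $g(c)$ appearing once among its neighbors.

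The paper avoids this obstacle entirely by placing the star on the \emph{other} side of the bipartition. Concretely: first add a universal vertex $x$ to $G$, obtaining $G'$; then subdivide every edge of $G'$ by a vertex $e_{uv}$; finally add a new vertex $w$ adjacent to every vertex of $V(G')$. Take $X=\{w\}\cup\{e_{uv}:uv\in E(G')\}$ and $Y=V(G')$, with the star $T$ on $X$ having $w$ as center and every $e_{uv}$ as a leaf. Now each $y\in Y=V(G')$ is adjacent to $w$ (the center) together with some leaves $e_{uv}$, so $N_H(y)$ is automatically a subtree of $T$. Crucially, the edge-gadget $e_{uv}$ is a \emph{leaf of the star}, not a $Y$-vertex, so there is no subtree constraint on its neighborhood at all; the degree-$2$ vertex with $N(e_{uv})=\{u,v\}$ works exactly as in Theorem~\ref{thm:perfect-hard} and directly forces $g(u)\neq g(v)$. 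The extra vertices $x$ and $w$ each consume one new color, and the claim becomes $\chi(G)\leq k$ if and only if $\chi_o(H)\leq k+2$. In short, you put the original vertices on the star side and then struggled to route edge-gadgets through the center; the fix is to swap the roles of the two sides.
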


\begin{proof}
    We give a polynomial time reduction from the \emph{proper $k$-coloring} problem, which is known to be NP-complete for $k\geq 3$. Given an instance $(G,k)$ of graph coloring, we define a new graph $G'$ with $V(G') = V(G) \cup \{x\}$ and $E(G') = E(G) \cup \{xv ~|~ v \in V(G)\}$. That is, $x$ is a universal vertex in the graph $G'$. Next, we construct a star-convex bipartite graph 
 $H$ from $G'$ as follows.

$$ V(H)= V(G')\cup\{w\} \cup \{e_{uv} ~|~ uv \in E(G')\} \text \qquad { and } $$ 
$$E(H)=  ~\{wy ~|~ y \in V(G')\} ~\cup ~\{ue_{uv},ve_{uv}\ ~|~ uv \in E(G') \}$$
    

    Formally, $H = (V(G') \cup V_E \cup \{w\}, E_A \cup E_B$), where $V_E = \{ e_{uv} \mid uv \in E(G')\}$, $E_A = \{ue_{uv}, ve_{uv} \mid uv \in E(G)\}$ and $E_B = \{wy \mid y \in V(G')\}$. Clearly, $H$ is a bipartite graph and can be constructed in polynomial time.

    We construct a star graph on $m+ 1$ vertices with $w$ as the center vertex and edges of $G'$ as leaves. Then the neighborhood of any vertex $v$ of $G'$ induces a subtree of the star graph as $w$ is adjacent to every vertex $v$. Hence, the graph $H$ is a star-convex bipartite graph.

    \begin{claim}
        $\chi(G) \leq k$ if and only if $\chi_o(H) \leq k + 2$.
    \end{claim}
    \begin{proof}
        \emph{Forward direction.} 
Let $f:V(G) \rightarrow [k]$ be a proper coloring of $G$ using $k$ colors. We define a coloring $g:V(H) \rightarrow [k+2]$ of $H$ as follows:
for $v \in V(G)$ set $g(v) = f(v)$.  For every $uv \in E(G)$, set $g(e_{uv})=\ell_1$, where $\ell_1$ is an arbitrarily chosen color   
 from the set $[k] \backslash \{g(u),g(v)\}$. Set $g(x) = k+1$ and $g(w) = k+2$. It is easy to verify that $g$ is an odd coloring of $H$. 

\emph{Reverse direction.}
Let $g$ be an odd coloring of $H$ using $\ell$ colors. We show that $g$ restricted to $V(G)$ gives a proper coloring of $G$ using at most $\ell -2$ colors. Since $N(e_{uv}) = \{u,v\}$ for each edge $uv \in E(G')$, it follows that, $g(u) \neq g(v)$ for every edge $uv \in E(G')$. Therefore, the restriction of $g$ to the vertices of $G'$ gives a proper coloring of $G'$ with $\ell$ colors. 
Without loss of generality, assume that $g(w) = \ell$. Since $w$ is adjacent
to every vertex $v \in V (G')$, it follows that $g(v) \neq \ell$. As $x$ is a universal vertex in $G'$, we have, $g(x) \neq g(v)$, for each $v \in V(G)$. Therefore, $g$ when restricted to the vertices of $G$ gives a
proper coloring of $G$ with at most $\ell - 2$ colors, hence $\chi(G) \leq \ell-2$. This completes the proof of the claim. 
\qed
\end{proof}
\qed
\end{proof}


    


\bibliography{BibFile}


\end{document}